\documentclass[12pt]{article}
\usepackage[utf8]{inputenc}
\usepackage{graphicx, setspace,natbib}
\usepackage{xcolor}
\usepackage[a4paper, top=2.5cm, bottom=2.5cm, left=3cm, right=3cm]{geometry}
\usepackage{amsmath,amssymb,amsthm}
\usepackage{bm}
\usepackage{booktabs}
\usepackage{float}
\usepackage{setspace}
\usepackage[hidelinks]{hyperref}
\usepackage{caption} 
\def\equationautorefname~#1\null{Eq.(#1)\null}
\usepackage{microtype} 
\newtheorem{assumption}{Assumption}
\newtheorem{lemma}{Lemma}
\newtheorem{theorem}{Theorem}
\newtheorem{corollary}[theorem]{Corollary}

\begin{document}
\title{\Large Soft-Noncrossing Bayesian Panel Quantile Regression for Measuring Climate Tail Risk}

\author{Florian Huber \\
 \emph{University of Salzburg} \and Aubrey Poon \\
 \emph{University of Kent} \and Dan Zhu\\
 \emph{Monash University}}

\onehalfspacing
\maketitle
\begin{abstract}

\noindent We develop a hierarchical Bayesian panel quantile regression model in which unit-specific coefficient paths are smoothed across quantiles by Gaussian processes, while a common time effect absorbs aggregate shocks. Componentwise-monotone Bernstein polynomials, perturbed by unit-specific deviations, deliver soft noncrossing, and we provide identification conditions together with a bound on the crossing probability. Applying the model to 33 countries over 1979--2023, we find that global temperature shocks generate a systemic, non-diversifiable downside risk to output growth. This risk is concentrated in the lower tail and disproportionately affects emerging markets. Finally, we apply our framework to risk analysis and show that the model reduces out-of-sample tail-risk forecast loss by roughly one-third relative to country-specific quantile regressions.
\end{abstract}
\section{Introduction}

Many of the most consequential questions in macroeconomics and finance concern the tails of the outcome distribution rather than its center: the risk of a severe contraction in output, the vulnerability of growth to financial stress, and the downside exposure of an economy to aggregate disturbances \citep{adrian2019vulnerable,adrian2022term}. Quantile regression \citep{koenker1978regression,koenker2005quantile} is the natural instrument to apply to answer such questions, since it characterizes how covariates reshape the entire conditional distribution of an outcome rather than only its mean, and does so under weak assumptions on the data generating process \citep{angrist2006quantile}. Yet the questions that make quantile methods indispensable increasingly arise in panel settings where the object of interest is not a single conditional quantile but an entire cross section of unit-specific, quantile-indexed responses. In these settings standard quantile regression confronts three obstacles that existing methods address only in isolation.

First, quantile regressions are often estimated separately at each quantile level, which can produce noisy coefficient estimates and crossing quantile curves, especially when the time dimension is short. The problem of noncrossing quantile estimation has therefore received considerable attention in the quantile regression literature \citep{bondell2010noncrossing,santos2020noncrossing}, though existing solutions are largely frequentist and tailored to a single cross-section rather than a panel. Second, panel data typically contain both unit-specific heterogeneity and common time-varying shocks. Ignoring such common shocks may confound cross-sectional heterogeneity with aggregate dynamics, while treating time effects independently across quantiles can quickly lead to a high-dimensional and weakly identified model. Third, allowing slope coefficients to vary across both units and quantiles creates a large number of parameters, making regularization or partial pooling essential in moderate-sized panels.

This paper develops a hierarchical Bayesian panel quantile regression model that addresses these issues. The model allows unit-specific intercepts and slopes to vary smoothly across quantiles, while introducing a common dynamic time effect that is invariant across quantile levels. The quantile-invariant time effect captures aggregate shocks that shift the conditional distribution in parallel, such as global macroeconomic conditions, policy disturbances, financial shocks, or climate-related common factors. Unit-specific intercepts and slopes capture heterogeneous distributional responses across countries, firms, regions, or individuals. This separation makes it possible to distinguish common aggregate variation from heterogeneous unit-level responses.

To link information across quantiles, the unit-specific coefficient paths are modeled as stochastic processes indexed by the quantile level. In particular, we place a multivariate Gaussian process prior on the intercept and slope functions. This prior induces smoothness across nearby quantiles and provides partial pooling across units, improving estimation when the panel is short or when the number of quantile levels is large. This idea is related to recent approaches that smooth quantile functions or quantile coefficients across quantile levels \citep{li2025spline}, but differs by combining smooth coefficient paths with heterogeneous panel slopes and a dynamic common time effect. For Bayesian estimation, we exploit the connection between quantile regression and the asymmetric Laplace likelihood \citep{kozumi2011gibbs}.

A central difficulty in joint quantile modeling is the possibility of quantile crossing. Because conditional quantile functions must be nondecreasing in the quantile index, unrestricted quantile-specific estimates may violate a basic property of conditional distributions. We address this issue by imposing shape restrictions on the population coefficient paths. After transforming covariates to a bounded nonnegative support, componentwise monotonicity of the intercept and slope paths provides a sufficient condition for noncrossing over the transformed covariate space. We implement these restrictions using a Bernstein-polynomial representation of the population path. The unit-specific deviations from this monotone population path are modeled as smooth Gaussian-process deviations, yielding a form of soft noncrossing: individual quantile curves are shrunk toward a globally monotone population curve while still allowing data-driven heterogeneity across units.

The paper makes three contributions. First, we introduce a panel quantile regression model in which a quantile-invariant common component acts as a latent factor that absorbs aggregate shocks, cleanly separating systemic distributional variation from heterogeneous, unit-specific responses. This separation is essential whenever unobserved common shocks are present but the object of interest is a distribution rather than an average, as in growth-at-risk and related macro-financial applications. Second, we place a multivariate Gaussian process prior on the quantile-indexed coefficient paths, which borrows information smoothly across adjacent quantiles and partially pools across units. Relative to methods that smooth quantile functions within a single cross-section \citep{li2025spline}, the prior operates jointly across the quantile index and the cross-section and remains tractable in short panels. Third, we establish identification and prove a probability bound showing that quantile crossing becomes arbitrarily unlikely as the unit-specific deviations concentrate around a Bernstein-monotone population path --- turning noncrossing from a constraint imposed after estimation into a property the prior delivers by construction.

We apply our proposed panel quantile regression framework to a quarterly panel of 33 countries, drawn from \citet{mohaddes2024compilation}, to investigate the relationship between climate-related shocks and downside macroeconomic risk over the period 1979--2023. Each country's climate-related shocks are constructed following the recent methodology of \citet{10.1093/qje/qjag011}. The application is designed to illustrate what the framework delivers. We decompose climate disturbances into local versus global and level versus volatility components, and by tracing each across the conditional distribution of growth we isolate a systemic, non-diversifiable source of downside risk that neither a country-by-country nor a mean-based analysis can unveil.

We find that local temperature shocks generate modest and broadly symmetric responses across the conditional distribution of GDP growth at the panel level, masking substantial heterogeneity across countries that becomes apparent in our cross-country analysis. This is consistent with \citet{berg2024gdp}, who similarly document considerable heterogeneity in the sign and magnitude of GDP responses to idiosyncratic temperature shocks across countries, with the direction of the response varying systematically with country characteristics such as income, educational attainment, and trade openness. By contrast, global temperature shocks generate more pronounced negative effects, particularly in the lower tail of the conditional distribution, consistent with \citet{kahn2021long}, who show that persistent increases in temperature above historical norms are associated with significant reductions in long-run economic growth. Turning to volatility, local temperature volatility shocks produce a modest response skewed toward the lower tail, with the 90--10 per cent band widening as the horizon progresses. Global temperature volatility shocks, however, display a markedly different pattern, eliciting a positive response in output growth in the short term that dissipates over the medium to long term.

For all 33 countries, we construct a climate growth-at-risk measure that isolates the combined impact of the four climate shocks on each country's downside risk. We find that the impact of climate shocks on growth-at-risk is widespread and uniformly negative at the one-quarter horizon, with all 33 countries recording a deterioration in expected shortfall and a panel average of $-$0.460 percentage points. Emerging market economies bear a larger short-run climate tail risk than advanced economies, with EM and AE averages of $-$0.530 and $-$0.395 percentage points, respectively, and climate tail risks dissipate substantially at medium and longer horizons.

To evaluate the practical utility of our proposed framework, we conduct a pseudo-out-of-sample forecasting exercise, comparing one-quarter-ahead tail-risk forecasts from our panel quantile model against a country-specific quantile benchmark. We find strong and consistent evidence that the panel quantile regression produces more accurate growth-at-risk forecasts, reducing tail forecast loss by approximately 34 per cent relative to the country-specific benchmark.

The remainder of the paper is organized as follows. Section~\ref{sec: econometric_framework} introduces the panel quantile regression model, discusses identification, presents the monotonicity conditions used to control quantile crossing, and describes the Bayesian estimation strategy. Section~\ref{sec:climate} applies the model to a panel of 33 countries to study the distributional effects of temperature shocks and temperature volatility. Section~\ref{subsec:climate_robustness} carries out three robustness checks, and Section~\ref{sec:conclusion} concludes.

\section{Econometric Framework}\label{sec: econometric_framework}
\subsection{Panel data and Standard Two Way Fixed Effect Model}
Panel data consist of repeated observations on the same units (e.g.\ individuals, firms, regions) over multiple time periods. Formally, let $y_{it}\in \mathbb{R}$ denote the outcome of unit $i$ at time $t$,
\begin{equation}
y_{it} = a_i + f_t + \bm{x}_{it}'\bm{\beta}_i + v_{it},    \label{eq: panel}
\end{equation}
where $a_i$ and $f_t$ are unit and time-fixed effects, and $\bm{x}_{it} \in \mathbb{R}^{p-1}$ is a vector of covariates that impacts $y_{it}$ through unit-specific coefficients $\bm{\beta}_i$. We impose no parametric or distributional assumptions on the shocks $v_{it}$; identification instead rests on the conditional quantile restriction introduced below.

This two-way fixed-effects specification is now standard in applied work, as it simultaneously controls for both unit-specific heterogeneity and aggregate time shocks \citep{arellano2001panel}. Unlike the canonical two-way fixed-effects model, we allow the slopes to be unit-specific: the coefficient $\bm{\beta}_i$ represents how each unit translates changes in $\bm{x}_{it}$ into changes in the outcome. Such heterogeneous-coefficient panels move beyond a single ``average effect'' and provide a richer description of cross-sectional variation \citep{pesaran1995estimating}. The additive time effect $f_t$ is, moreover, the leading special case of the common-factor error structures used to capture cross-sectional dependence in large heterogeneous panels \citep{pesaran2006estimation}.

\subsection{Panel Quantile Regression with Heterogeneous Slopes}

The mean regression framework in \autoref{eq: panel} focuses on average effects. In many applications, however, researchers are interested in how covariates influence the entire conditional distribution of the outcome. For instance, the impact of education on wages may differ between low- and high-income individuals {\citep{buchinsky1994changes,angrist2006quantile}}. Or in time series settings where financial conditions trigger a stronger effect in the left tail of the distribution of output growth \citep{adrian2019vulnerable}. Quantile regression is a suitable tool to answer such questions.

Quantile regressions allow the parameters of the panel model to vary with the quantile index $\tau \in (0,1)$. Formally, the $\tau^{\text{th}}$ conditional quantile of $y_{it}$ given the information set $\mathcal{F}_t$ is specified as
\[
Q_{\tau}(y_{it}\mid \mathcal{F}_t) = a_i(\tau) + f_t + \bm{x}_{it}'\bm{\beta}_i(\tau),
\]
where $a_i(\tau)$ captures unit-specific heterogeneity at quantile $\tau$, $f_t$ represents common time effects (assumed invariant across $\tau$ for parsimony), and $\bm{\beta}_i(\tau)$ allows for heterogeneous slope coefficients across units and across quantiles. We choose to maintain the assumption that $f_t$ is invariant across $\tau$. The motivation is that $f_t$ captures unobserved aggregate shocks or macro-level influences that are common to all units in period $t$. These shocks play a role similar to observed covariates $\bm{x}_{it}$ that enter the model linearly and are assumed to have the same realization regardless of the quantile being estimated. In other words, just as $\bm{x}_{it}$ does not change with $\tau$, we interpret $f_t$ as an unobserved regressor that is likewise \emph{constant across quantiles} but time-varying. For the time effects $\{f_t\}$, we adopt a dynamic specification by modeling them as a latent AR(1) process with observed aggregate predictors. Specifically,
\[
f_t = \gamma f_{t-1} +\bm{\xi}'\bm{w}_t+ \eta_t, \qquad \eta_t \sim \mathcal{N}(0,\sigma_f^2),
\]
is not just a static time effect but evolves over time in response to global shocks $\bm{w}_t$, with corresponding coefficients $\bm{\xi}$. The dynamic factor $f_t$ captures unobserved common forces --- such as macroeconomic conditions, policy shifts, or broad financial shocks --- that affect all units simultaneously, while $\bm{w}_t$ represents observable indicators of such global disturbances. Importantly, $f_t$ is assumed invariant across quantiles, meaning these common shocks shift the conditional distribution of $y_{it}$ in parallel rather than altering its shape across quantiles. This ensures that the role of aggregate dynamics is separated from the heterogeneous unit-level responses embedded in $a_i(\tau)$ and $\bm{\beta}_i(\tau)$. The result is a flexible structure that disentangles individual heterogeneity from evolving global shocks, while preserving tractability within a Bayesian hierarchical framework.

It is useful to state the conditions under which the quantile-specific parameters are identified from the panel quantile restriction. Identification requires separating three distinct sources of variation: unit-specific heterogeneity, common time effects, and unit-specific responses to the regressors. Since the common time effect is assumed to be invariant across quantiles, this restriction must be imposed explicitly. In addition, a normalization is needed to separate the unit intercepts from the time effects, and a full-rank condition is required to rule out perfect collinearity among unit effects, time effects, and unit-specific regressors.
\begin{assumption}
\label{ass:quantile_identification}
Fix $\tau\in(0,1)$ and let $\bm{x}_{it}\in\mathbb{R}^{K}$, where $K=p-1$. The conditional $\tau$-quantile of $y_{it}$ is given by
\[
Q_{\tau}(y_{it}\mid \mathcal I_{it}) = a_i(\tau)+f_t+\bm{x}_{it}'\bm{\beta}_i(\tau),
\]
where $\mathcal I_{it}$ denotes the information set with respect to which the conditional quantile is defined. The following conditions hold.

\begin{enumerate}
\item     The common time effect $\{f_t\}_{t=1}^T$ is invariant across  quantiles. That is, the same sequence $\{f_t\}_{t=1}^T$ enters the conditional quantile function for every $\tau\in(0,1)$.

\item     The time effects satisfy the normalization
    \[
    \sum_{t=1}^T f_t = 0.
    \]
\item No linear combination of the unit effects, normalized time effects, and unit-specific regressors is perfectly collinear. Equivalently, after imposing $\sum_{t=1}^T f_t=0$, the stacked panel design containing unit effects, normalized time effects, and unit-specific regressors has full column rank.
\item   Define the quantile residual
    \[
    v_{it}(\tau)
    =
    y_{it}
    -
    a_i(\tau)
    -
    f_t
    -
    \bm{x}_{it}'\bm{\beta}_i(\tau).
    \]
Then
    \[
    P\left(v_{it}(\tau)\le 0\mid \mathcal I_{it}\right)=\tau.
    \]
\end{enumerate}
\end{assumption}

A few remarks on Assumption~\ref{ass:quantile_identification} are in order. First, the information set $\mathcal I_{it}$ is defined as the $\sigma$-algebra generated by the unit-specific intercepts and slopes $\{\alpha_i(\tau),\bm{\beta}_i(\tau)\}$, the common time effects $\{f_t\}$, and the observed covariates $\{\bm{x}_{it}\}$. The conditional quantile restriction in part~(d) therefore requires that, at the true parameter values, $\bm{x}_{it}$ and $f_t$ capture all systematic variation in the $\tau^{\text{th}}$ conditional quantile, leaving a residual whose $\tau^{\text{th}}$ conditional quantile is zero. This is a correct-specification condition: it rules out omitted variables that are correlated with the included regressors at the quantile of interest and requires that $\bm{x}_{it}$ be at least predetermined with respect to the quantile residual $v_{it}(\tau)$. In the empirical applications below, where the covariates include lagged GDP growth, inflation, and financial conditions, this predeterminedness is plausible, though it should be assessed in each context.

Second, part~(a)---the invariance of $f_t$ across quantiles---is the most substantive restriction in Assumption~\ref{ass:quantile_identification}. It implies that common aggregate shocks shift the entire conditional distribution of outcomes in parallel, rather than changing its shape across quantiles. This restriction rules out quantile-dependent common time shocks. Under the specification, all time-varying common variation is captured by the single factor $f_t$, while quantile-specific heterogeneity is attributed to the unit-specific intercepts and slopes. If common shocks were believed to affect the tails and the centre of the distribution differently, one could allow $f_t(\tau)$ to vary with $\tau$, but this would substantially increase the number of time-specific parameters and weaken the separation between common and unit-specific dynamics. We maintain the $\tau$-invariant specification for parsimony and because, in the applications we consider, the quantile-specific heterogeneity of primary interest operates at the unit level.

Third, although parts~(a)--(d) are stated as classical identification conditions, the model is estimated in a Bayesian framework with a hierarchical Gaussian-process prior on the unit-specific parameters $\bm{\theta}_i(\tau)$. When the time dimension is short, unit-specific coefficients may be weakly identified from the likelihood alone. The hierarchical prior provides regularisation and partial pooling across units and quantiles, thereby stabilising posterior inference. This regularisation should be distinguished from identification from the sampling model: Assumption~\ref{ass:quantile_identification} describes the restrictions under which the conditional quantile parameters are well defined, whereas the prior controls finite-sample uncertainty and shrinkage.

In addition to these identifying restrictions, we impose a standard regularity condition on the conditional distribution of the outcome. This condition is not a separate source of identification; rather, it guarantees that the conditional quantile is locally well defined and that the associated quantile regression objective has stable local behaviour around the identified parameter values.
\begin{assumption}
The conditional distribution of $y_{it}$ given $\mathcal I_{it}$ is continuous, and its conditional density is strictly positive and finite at the conditional quantile
    \[
q_{it}(\tau) = a_i(\tau)+f_t+\bm{x}_{it}'\bm{\beta}_i(\tau).
    \]
That is, for some constants $0<c<C<\infty$,
    \[
c \le f_{y\mid \mathcal I} \left(q_{it}(\tau)\mid \mathcal I_{it}\right) \le C .
    \]
\end{assumption}

Assumption~2 is the standard regularity condition for quantile regression (see, e.g., \citet{koenker2005quantile}, Theorem~4.1, and \citet{angrist2006quantile}). Continuity of the conditional distribution rules out mass points in the outcome, which is plausible for the continuously measured macroeconomic variables we study (GDP growth, consumption growth, investment growth). The requirement that the conditional density is bounded away from zero and infinity, $0<c\le f_{y\mid\mathcal I}(q_{it}(\tau)\mid\mathcal I_{it})\le C<\infty$, ensures that the conditional quantile is locally unique and that the Hessian of the check-loss objective is well behaved. In the Bayesian framework adopted below, this condition also guarantees that the asymmetric Laplace working likelihood provides a valid characterisation of the quantile of interest \citep{sriram2013posterior}.

The uniform lower bound $c>0$ deserves brief comment. In the extreme tails (say, $\tau<0.05$ or $\tau>0.95$), the conditional density of a continuously distributed outcome naturally thins, and the assumption of a {\em uniform} positive bound across all $(i,t)$ pairs may be strained. In practice, the Bayesian estimator based on the asymmetric Laplace likelihood is robust to mild violations of this condition, and the finite-sample regularisation provided by the hierarchical prior further stabilises inference in the tails. We therefore view Assumption~2 as an asymptotic idealisation that justifies the use of quantile regression methods, while recognising that its literal satisfaction at extreme quantiles is an approximation.

To address the high dimensionality of allowing unit-specific coefficients at each quantile, one can impose a random-coefficients structure. Specifically, define
\[
\bm{\theta}_i(\tau) =
\begin{pmatrix}
\alpha_i(\tau) \\
\bm{\beta}_i(\tau)
\end{pmatrix},
\]
which collects the unit-specific intercept and slope parameters at quantile $\tau$. We assume that
\begin{equation}
\label{eq:theta_prior}
\bm{\theta}_i(\tau) \sim \mathcal{N}\left(\bm{\mu}(\tau), \bm{\Sigma}(\tau)\right),
\end{equation}
independently across $i$, where $\bm{\mu}(\tau)$ is the mean vector and $\bm{\Sigma}(\tau)$ the covariance matrix. This hierarchical specification provides two advantages. First, it introduces partial pooling: the estimates of $\bm{\theta}_i(\tau)$ borrow strength across units, stabilizing inference when $T$ is small. Second, the covariance structure $\bm{\Sigma}(\tau)$ allows for systematic heterogeneity in how different units respond to the covariates at different quantiles, while keeping the model parsimonious and estimable.

Rather than treating $\bm{\theta}_i(\tau)$ as independent across quantiles, it is natural to view the collection $\{\bm{\theta}_i(\tau): \tau \in (0,1)\}$ as a stochastic process indexed by $\tau$. In particular, we may assume that $\bm{\theta}_i(\tau)$ follows a multivariate Gaussian process with mean function $\bm{\mu}(\tau)$ and covariance kernel $K(\tau,\tau')$, i.e.
\[
\bm{\theta}_i(\cdot) \sim \mathcal{GP}\big(\bm{\mu}(\cdot), K(\cdot,\cdot)\big).
\]
This formulation links parameters across quantiles smoothly, so that estimates at nearby quantiles borrow strength from one another. Intuitively, the Gaussian process prior regularizes the quantile-specific heterogeneity by ruling out arbitrary jumps across $\tau$, while still allowing flexible nonlinear variation in both intercepts and slopes. As a result, the model can capture how the entire conditional distribution of outcomes shifts across units, rather than treating quantiles in isolation.

Specifically, we write
\[
\bm{\theta}_i(\tau) = \bm{\mu}(\tau) + \bm{u}_i(\tau),
\]
where the mean function is represented as
\[
\bm{\mu}(\tau) = \bm{\Gamma}\bm{\Phi}(\tau),
\]
with $\bm{\Phi}(\tau)$ denoting a vector of basis functions (e.g.\ polynomials, splines, or Fourier terms) and $\bm{\Gamma}$ the associated coefficients. This flexible specification captures smooth systematic trends across quantiles. The mean function $\bm{\mu}(\tau) = \bm{\Gamma}\bm{\Phi}(\tau)$ captures the central trend of coefficient evolution across quantiles. In theory, for continuous distributions, the true quantile process $\bm{\theta}(\tau)$ is a smooth function of $\tau$. We approximate this unknown smooth function using a sieve, where the basis $\bm{\Phi}(\tau)$ provides a flexible yet parsimonious representation.

\cite{li2025spline} propose \emph{spline quantile regression} (SQR), a joint estimation framework that \emph{smooths across quantile levels} by representing the regression coefficients as smoothing-spline functions of $\tau$. Their univaraite method augments the standard quantile loss with a roughness penalty in $\tau$, producing coefficient paths that vary smoothly over quantiles and are computed via a large but structured linear program. In contrast, our panel model introduces a \emph{$\tau$-invariant dynamic common factor} to capture aggregate shocks, places a \emph{multivariate Gaussian-process prior on the parameter paths} to induce probabilistic smoothness and pooling across quantiles and units. A similar formulation via GP is \cite{santos2020noncrossing}, that is placed on the \emph{quantile function across $\tau$}, i.e.\ on the \emph{output} $Q_\tau(\cdot)$ (or on a directional/linear predictor implied by it), rather than directly on the \emph{parameter paths}. Econometrically, our model as oppose to \cite{santos2020noncrossing} captures how the \emph{marginal effect of a regressor} changes continuously along the conditional distribution, while preserving a low-dimensional functional form that is easily interpretable. Statistically, the multivariate GP prior acts as a flexible regularizer that shrinks the entire coefficient trajectory toward a smooth mean function, yielding efficiency gains when the number of quantile levels is large or when individual panels are short.

For the covariance structure, we adopt a kernel that decays with the distance between quantiles, so that parameters at nearby quantiles are more strongly correlated than those far apart. A convenient choice is the exponential kernel
\[
K(\tau,\tau') = \bm{\Sigma} \exp\left(-\frac{|\tau-\tau'|}{\lambda}\right),
\]
where the diagonal elements of $\bm{\Sigma}$ control the marginal variances and $\lambda>0$ is a length-scale parameter governing how quickly dependence fades as $|\tau-\tau'|$ increases. This ensures that coefficients evolve smoothly across quantiles while allowing sufficient flexibility for different behaviour at the tails. Combined with a basis-function representation of the mean, this structure balances flexibility with parsimony in describing how heterogeneous slopes and intercepts vary across the conditional distribution.

The Gaussian process specification can be interpreted along three dimensions:
\begin{enumerate}
\item the exponential kernel $K(\tau,\tau') = \bm{\Sigma}
\exp(-|\tau-\tau'|/\lambda)$ governs dependence across quantiles such that\[  \begin{bmatrix}
    \bm{\theta}_i(\tau)\\
    \bm{\theta}_i(\tau')
\end{bmatrix} \sim \mathcal{N}\left( \begin{bmatrix}
    \bm{\Gamma} \bm{\Phi}(\tau)\\
    \bm{\Gamma}\bm{\Phi}(\tau')
\end{bmatrix}, \begin{bmatrix}
    1&& \exp\left(-\frac{|\tau-\tau'|}{\lambda}\right)\\
    \exp\left(-\frac{|\tau-\tau'|}{\lambda}\right)&& 1
\end{bmatrix}\otimes \bm{\Sigma}\right)\]
\item  the matrix $\bm{\Sigma}$ encodes the dependence across coefficients within a given quantile; for example, the intercept and slope are correlated if $\bm{\Sigma}$ has nonzero off-diagonal elements, while a diagonal $\bm{\Sigma}$ implies independence
\item conditional on the Gaussian process prior, the unit-level coefficients $\bm{\theta}_i(\tau)$ are independent draws across $i$. In the special case $\bm{\Sigma}=0 \times \bm I$, all units share the same coefficient function given by the mean $\bm{\mu}(\tau)$, so the Gaussian process collapses to a common set of coefficients with no cross-unit heterogeneity.
\end{enumerate}

\subsection{Monotonicity across quantiles and covariate scaling}
Because conditional quantiles must be nondecreasing in $\tau$, it is desirable to rule out quantile crossing by construction. Throughout this subsection we work with \emph{transformed} covariates $\bm{z}_{it}=\big(T_1(x_{it,1}),\ldots,T_K(x_{it,K})\big)'\in[0,1]^K$, where each $T_k:\mathbb{R}\to[0,1]$ is a monotone scaling map such as min--max scaling, an empirical rank transform, or a logistic squash. These are applied to the original regressor $x_{it,k}$ of \autoref{eq: panel}. The transformed regressors, labeled $\bm{z}_{it}$, are the arguments of the quantile function below. Notice that they are distinct from the observed aggregate predictors $\bm{w}_t$ that enter the time effect $f_t$.
\begin{lemma}
\label{lem:noncrossing_sufficient}
Let the conditional quantile function be
\[
Q_\tau(y_{it}\mid \mathcal I_{it}) = a_i(\tau)+f_t+\bm{z}_{it}'\bm{\beta}_i(\tau),
\]
where $\bm{z}_{it}\in[0,1]^K$ and the common time effect $f_t$ is invariant across quantiles. Suppose that for every $0<\tau<\tau'<1$,
\[
a_i(\tau')\ge a_i(\tau)
\]
and
\[
\beta_{ik}(\tau')\ge \beta_{ik}(\tau), \qquad k=1,\ldots,K.
\]
Then, for every $\bm{z}_{it}\in[0,1]^K$,
\[
Q_{\tau'}(y_{it}\mid \mathcal I_{it}) \ge Q_{\tau}(y_{it}\mid \mathcal I_{it}).
\]
Hence the fitted conditional quantile function is nondecreasing in $\tau$.
\end{lemma}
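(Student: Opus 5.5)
The plan is to argue directly by taking the difference of the two conditional quantiles at a fixed unit $i$, time $t$ and transformed covariate vector $\bm{z}_{it}\in[0,1]^K$, and showing that this difference is a sum of nonnegative terms. Fix $0<\tau<\tau'<1$. Because the common time effect $f_t$ is invariant across quantiles (the same sequence enters every $Q_\tau$, as in part~(a) of Assumption~\ref{ass:quantile_identification}), it cancels exactly in the difference:
\[
Q_{\tau'}(y_{it}\mid \mathcal I_{it})-Q_{\tau}(y_{it}\mid \mathcal I_{it})
=\big[a_i(\tau')-a_i(\tau)\big]+\sum_{k=1}^K z_{it,k}\big[\beta_{ik}(\tau')-\beta_{ik}(\tau)\big].
\]
This cancellation is the only point where the $\tau$-invariance of $f_t$ is used.

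Next I will bound each term from below by zero. The intercept increment $a_i(\tau')-a_i(\tau)$ is nonnegative by the first hypothesis. For each $k$, the slope increment $\beta_{ik}(\tau')-\beta_{ik}(\tau)$ is nonnegative by the second hypothesis. The weight $z_{it,k}=T_k(x_{it,k})$ lies in $[0,1]$ and is therefore nonnegative. Each summand is thus a product of two nonnegative reals, and the whole expression is $\ge 0$. Since $\tau<\tau'$ and $\bm{z}_{it}$ were arbitrary, the inequality holds for every $\bm{z}_{it}\in[0,1]^K$, so $\tau\mapsto Q_\tau(y_{it}\mid\mathcal I_{it})$ is nondecreasing.

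There is no genuine obstacle here. The step that needs care is making explicit why the covariate scaling matters: only nonnegativity of $z_{it,k}$ is used, and the upper bound $1$ plays no role. Without the monotone transformation $T_k$, a negative regressor value would flip the sign of the slope contribution, and componentwise monotonicity of $\bm{\beta}_i$ would no longer suffice. I will also remark that the conclusion holds pointwise in $\bm{z}_{it}$. It therefore applies equally to the fitted function whenever the estimated paths $\hat a_i,\hat{\bm\beta}_i$ satisfy the same monotonicity, for instance under the Bernstein-polynomial construction introduced later.
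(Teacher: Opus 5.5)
Your proposal is correct and matches the paper's proof. Both take the difference of the two quantiles and cancel the $\tau$-invariant $f_t$. Both then observe that the intercept increment and each $z_{it,k}\{\beta_{ik}(\tau')-\beta_{ik}(\tau)\}$ are nonnegative, using only $z_{it,k}\ge 0$.
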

The transformation $T_k$ is introduced to place the covariates on a bounded nonnegative support. It is \emph{nonnegativity} of $\bm{z}_{it}$ that makes the componentwise monotonicity condition sufficient for global noncrossing (Lemma~\ref{lem:noncrossing_sufficient}); the upper bound $z_{it,k}\le 1$ is not needed for the Lemma itself, but is used to control the constants in the probabilistic bound of Theorem~\ref{thm:soft_noncrossing} below. When $T_k$ is strictly increasing, it preserves the ordering of the original covariate. However, nonlinear transformations such as rank transformations or logistic squashing change the scale of the regressor, so the associated coefficients should be interpreted as effects of the transformed covariates rather than effects of the original covariates.

Note that by restricting the covariate space to the $K$-dimensional unit box $[0,1]^K$, our approach is related to that of \cite{bondell2010noncrossing}. Yet, their quantile curve is fitted as a function of $x$ (e.g., via penalized splines), and non-crossing is ensured by a simple constrained optimization that ties together the fits at multiple $\tau$; there is no explicit basis or stochastic process placed on the coefficient paths as functions of $\tau$. In contrast, our specification smooths \emph{across $\tau$ at the parameter level} by treating $\bm{\theta}_i(\tau)$ as a multivariate Gaussian process in $\tau$, with (later) non-crossing enforced via a Bernstein-polynomial representation. This separates regularization over quantile levels from the choice of flexibility in $x$ and provides a direct handle on the evolution of coefficient functions across the distribution.

To implement these monotonicity requirements in a stochastic specification, let $\bm{\theta}_i(\tau) = \big(a_i(\tau),\bm{\beta}_i(\tau)'\big)' = \bm{\mu}(\tau) + \bm{u}_i(\tau)$, where $\bm{\mu}(\tau)$ is a population mean path and $\bm{u}_i(\cdot)$ is a zero-mean stochastic deviation. Panel-wise monotonicity means that for all $0<\tau<\tau'<1$, $\bm{\theta}_i(\tau')-\bm{\theta}_i(\tau)\succeq 0$ componentwise. On a grid $0<\tau_1<\cdots<\tau_L<1$, the requirement $\bm{\theta}_i(\tau_{\ell+1})-\bm{\theta}_i(\tau_\ell)\succeq 0$ (componentwise) yields \emph{$p(L-1)$ linear inequalities per unit}, i.e.\ $N\,p(L-1)$ constraints in total. Instead, we enforce $\bm{\mu}(\cdot)$ to be monotone and control the probability that the unit-specific deviation $\bm{u}_i(\cdot)$ overturns this ordering through the scale of its Gaussian-process covariance. A convenient way to parametrise a monotone mean path is to expand each component in the Bernstein basis on $[0,1]$:
\[
\mu_j(\tau)=\sum_{m=0}^{M}\theta_{j,m} b_{m,M}(\tau), \qquad
b_{m,M}(\tau)=\binom{M}{m}\tau^{m}(1-\tau)^{M-m}.
\]
The Bernstein basis $ \{b_{m,M}\}_{m=0}^M $ on $[0,1]$ is nonnegative and forms a partition of unity, $\sum_{m=0}^M b_{m,M}(\tau)=1$. Hence $\mu_j(\tau)$ always lies in the convex hull of its coefficients $\{\theta_{j,m}\}$, which makes shape constraints transparent. In particular, a sufficient and convenient condition for $\mu_j(\cdot)$ to be nondecreasing on $[0,1]$ is $\theta_{j,0}\le\cdots\le\theta_{j,M}$. Imposing this reduces to linear inequalities on $\theta_{j,\cdot}$, so estimation under monotonicity remains a convex quadratic/linear program. Bernstein polynomials also enjoy uniform approximation: any continuous (and, in particular, any monotone) function on $[0,1]$ can be approximated arbitrarily well as $M\to\infty$, while avoiding knot placement issues that arise with splines. The coefficients are interpretable: $\theta_{j,0}$ and $\theta_{j,M}$ anchor the boundary levels (near $\tau=0,1$), and intermediate $\theta_{j,m}$ approximate $\mu_j(m/M)$, providing a clear link between parameter values and the quantile index.

Rather than enforcing panel-wise monotonicity for each unit (which entails $Np(L{-}1)$ inequalities on a grid), we impose monotonicity only on the population path $\bm{\mu}(\cdot)$ and let unit-specific deviations be small and smooth in $\tau$. Choosing small diagonal elements of $\bm{\Sigma}$ (with a prior centered near zero) shrinks $\bm{\theta}_i(\cdot)$ toward the monotone mean $\bm{\mu}(\cdot)$, delivering \emph{soft} monotonicity---unit-level curves are nearly nondecreasing with high probability---while still allowing data-driven departures when warranted; the length scale $\lambda$ controls how smoothly such departures evolve across quantiles. If rare local crossings remain, a post-estimation monotone rearrangement can be applied as a diagnostic or post-processing step to restore monotonicity of the reported quantile curves, although this may slightly alter the original empirical objective.

Componentwise monotonicity of the population path $\bm{\mu}(\cdot)$ guarantees that the population conditional quantile function is nondecreasing in $\tau$. However, unit-specific GP deviations can still generate local crossings. The probability of such crossings is small only when the population quantile curve has a positive separation margin between $\tau$ and $\tau'$ relative to the variance of the GP increment. The following theorem formalizes this soft noncrossing property.

\begin{assumption}[Smooth GP Prior]
\label{ass:gp}
The unit-specific deviation $\bm{u}_i(\cdot) = (u_{i,1}(\cdot), \ldots, u_{i,p}(\cdot))'$ follows a mean-zero Gaussian process prior with separable covariance structure,
\[
\operatorname{Cov}\big(\bm{u}_i(\tau), \bm{u}_i(\tau')\big) = K_\lambda(\tau,\tau')\bm{\Sigma},
\]
where $\bm{\Sigma}\in\mathbb{R}^{p\times p}$ is a positive semidefinite matrix capturing within-quantile dependence across the $p$ intercept and slope components, and $\bm{K}_\lambda$ is a stationary kernel, normalized so that $K_\lambda(\tau,\tau)=1$ for all $\tau\in[0,1]$, and satisfying $K_\lambda(\tau,\tau')\le 1$.
\end{assumption}
Fix $0<\tau<\tau'<1$ and let
\[
\bm{r}(\bm{z})=(1,\bm{z}')'\in\mathbb{R}^{p}, \qquad \bm{z}\in\mathcal Z\subseteq[0,1]^K, \qquad p=K+1 .
\]
Write
\[
\bm{\theta}_i(\tau)=\bm{\mu}(\tau)+\bm{u}_i(\tau),
\]
where $\bm{\theta}_i(\tau)=(a_i(\tau),\bm{\beta}_i(\tau)')'$ and $\bm{u}_i(\cdot)$ satisfies Assumption~\ref{ass:gp}. Define the population margin
\[
m_{\tau,\tau'} = \inf_{\bm{z}\in\mathcal Z} \bm{r}(\bm{z})' \left\{ \bm{\mu}(\tau')-\bm{\mu}(\tau) \right\}.
\]

\begin{theorem}
\label{thm:soft_noncrossing}
Suppose that
\[
m_{\tau,\tau'}>0.
\]
Let
\[
\bm{d}_i(\tau,\tau') = \bm{u}_i(\tau')-\bm{u}_i(\tau).
\]
Under Assumption~\ref{ass:gp}, each component $d_{ij}(\tau,\tau')$ is mean-zero Gaussian with variance $\operatorname{Var}(d_{ij}(\tau,\tau')) = 2\{1-K_\lambda(\tau,\tau')\}\Sigma_{jj}$. Define the maximum marginal variance
\[
\bar\nu^2_{\tau,\tau'} = 2\{1-K_\lambda(\tau,\tau')\}\max_{1\le j\le p} \Sigma_{jj}.
\]
Then, for a given unit $i$, {writing $Q_\tau^{\,i}(\bm{z})=a_i(\tau)+f_t+\bm{z}'\bm{\beta}_i(\tau)$ for the fitted conditional quantile at a generic covariate value $\bm{z}\in\mathcal Z$,}
\[
\Pr\left( \exists \bm{z}\in\mathcal Z: {Q_{\tau'}^{\,i}(\bm{z})} <
{Q_{\tau}^{\,i}(\bm{z})} \right) \le p \exp\left( - \frac{m_{\tau,\tau'}^2}
{2p^2\bar\nu^2_{\tau,\tau'}} \right).
\]
In particular, for the exponential kernel
\[
K_\lambda(\tau,\tau') = \exp\left(-\frac{|\tau-\tau'|}{\lambda}\right),
\]
the bound becomes
\[
\Pr\left( \exists \bm{z}\in\mathcal Z: {Q_{\tau'}^{\,i}(\bm{z})} <
{Q_{\tau}^{\,i}(\bm{z})} \right) \le p \exp\left[ - \frac{m_{\tau,\tau'}^2}
{4p^2 \max_{1\le j\le p}\Sigma_{jj} \left\{1-\exp(-|\tau'-\tau|/\lambda)\right\}}
\right].
\]
Thus, for fixed $\tau<\tau'$ and fixed positive margin $m_{\tau,\tau'}>0$, the probability of crossing converges to zero as $\max_j\Sigma_{jj}\to 0$.
\end{theorem}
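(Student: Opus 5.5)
The plan is to reduce the crossing event to a deviation event for the Gaussian increment $\bm{d}_i(\tau,\tau')$, and then apply a union bound over its $p$ components together with the standard Gaussian tail bound.

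\textbf{Step 1: the variance formula.} By Assumption~\ref{ass:gp}, $d_{ij}(\tau,\tau')$ is a linear functional of a mean-zero Gaussian process, so it is mean-zero Gaussian. Its variance is
\[
\operatorname{Var}(d_{ij})=K_\lambda(\tau',\tau')\Sigma_{jj}+K_\lambda(\tau,\tau)\Sigma_{jj}-2K_\lambda(\tau,\tau')\Sigma_{jj}=2\{1-K_\lambda(\tau,\tau')\}\Sigma_{jj},
\]
using the normalization $K_\lambda(\tau,\tau)=1$. Since $K_\lambda\le 1$, this quantity is nonnegative and bounded by $\bar\nu^2_{\tau,\tau'}$.

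\textbf{Step 2: reduction of the crossing event.} Because $f_t$ cancels,
\[
Q^{\,i}_{\tau'}(\bm{z})-Q^{\,i}_{\tau}(\bm{z})=\bm{r}(\bm{z})'\{\bm{\mu}(\tau')-\bm{\mu}(\tau)\}+\bm{r}(\bm{z})'\bm{d}_i(\tau,\tau')\ \ge\ m_{\tau,\tau'}+\bm{r}(\bm{z})'\bm{d}_i .
\]
Every entry of $\bm{r}(\bm{z})=(1,\bm{z}')'$ lies in $[0,1]$, because $\bm{z}\in[0,1]^K$; this is where the upper bound $z_k\le 1$ enters. Hence $|\bm{r}(\bm{z})'\bm{d}_i|\le \sum_{j=1}^p|d_{ij}|$ uniformly in $\bm{z}$. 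A crossing at some $\bm{z}$ therefore implies $\sum_j|d_{ij}|>m_{\tau,\tau'}$, which in turn implies that $|d_{ij}|>m_{\tau,\tau'}/p$ for at least one $j$. The supremum over the (possibly uncountable) set $\mathcal Z$ thus disappears, and no chaining argument is needed.

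\textbf{Step 3: union bound and Gaussian tail.} By the union bound,
\[
\Pr(\text{crossing})\le\sum_{j=1}^p\Pr\big(|d_{ij}|>m/p\big).
\]
For each term, the Chernoff bound gives $\Pr(|d_{ij}|>s)\le 2\exp(-s^2/(2\nu_j^2))$, with $\nu_j^2\le\bar\nu^2_{\tau,\tau'}$. Substituting $s=m/p$ yields the exponent $m^2/(2p^2\bar\nu^2_{\tau,\tau'})$, which is the rate in the statement. The exponential-kernel case then follows by inserting $K_\lambda$ into $\bar\nu^2$. The limit as $\max_j\Sigma_{jj}\to0$ is immediate, since the exponent diverges while $m$, $\tau$, $\tau'$ and $\lambda$ stay fixed.

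\textbf{Main obstacle: the leading constant.} A two-sided Chernoff bound produces $2p$, not $p$, in front of the exponential. I see two ways to recover the constant $p$.

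\begin{itemize}
\item \emph{One-sided bound.} Crossing requires $\bm{r}(\bm{z})'\bm{d}_i<-m$, and because $r_j(\bm{z})\ge0$ this forces $\sum_j (d_{ij})_-> m$, where $(x)_-$ denotes the negative part. Hence some $j$ must satisfy $d_{ij}<-m/p$. The one-sided bound $\Pr(d_{ij}<-s)\le\exp(-s^2/(2\nu_j^2))$ then gives exactly $p\exp(\cdot)$. This is the route I would take.
\item \emph{Mills-ratio bound.} Alternatively, one could use the sharper tail bound $\Pr(|N(0,\nu^2)|>s)\le\exp(-s^2/(2\nu^2))$, which holds for $s\ge\nu\sqrt{2/\pi}$. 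I would keep this only as a fallback.
\end{itemize}

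The degenerate case $\Sigma_{jj}=0$ requires no separate treatment: there $d_{ij}=0$ almost surely, and the bound holds trivially.
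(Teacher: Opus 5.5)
Your proposal is correct and, in the one-sided route you settle on, is essentially the paper's proof: the paper also cancels $f_t$, uses $r_j(\bm{z})\in[0,1]$ to show that a crossing forces $d_{ij}(\tau,\tau')<-m_{\tau,\tau'}/p$ for some $j$ (phrased contrapositively via $\sum_j r_j(\bm{z})\le p$), and then combines a union bound over $j$ with the one-sided Gaussian tail bound $\exp\{-s^2/(2\nu^2)\}$, which is why the constant is $p$ rather than $2p$. The two-sided reduction and the Mills-ratio fallback are therefore unnecessary.
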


Theorem~\ref{thm:soft_noncrossing} relies on two modelling choices that merit comment. First, the theorem uses only the maximum marginal variance $\max_j\Sigma_{jj}$ rather than requiring independent components; the union bound remains valid under arbitrary within-quantile dependence because it bounds the probability that {\em any} component of the GP deviation is large enough to cause crossing. If $\bm{\Sigma}$ is diagonal, the bound reduces to the independent-components case, while nonzero off-diagonal entries (which imply correlated intercept and slope deviations) leave the bound unchanged. The bound is therefore consistent with the general separable covariance structure $\bm{K}_\lambda\otimes\bm{\Sigma}$ used throughout the paper.

Second, the noncrossing guarantees---both the deterministic {Lemma~\ref{lem:noncrossing_sufficient}} and the probabilistic Theorem~\ref{thm:soft_noncrossing}---assume that the transformed covariates lie in $[0,1]^K$ with all components nonnegative. This is satisfied by construction after applying the scaling transformations {$T_k$ introduced above} (min--max scaling, rank transforms, or logistic squashing). Without such a transformation, {Lemma~\ref{lem:noncrossing_sufficient}} would not hold for covariates that can take negative values, and the noncrossing property would need to be enforced through alternative means, such as constrained optimisation over the full covariate support. The transformation thus provides a simple sufficient condition for global noncrossing at the cost of reinterpreting the coefficients as effects of the transformed, rather than original, covariates.

\subsection{Bayesian Estimation}

Quantile regression can be conveniently estimated in a Bayesian framework by exploiting the connection between the check loss function and the asymmetric Laplace distribution (ALD). Specifically, the $\tau^{\text{th}}$ conditional quantile of $y_{it}$ given covariates $\bm{x}_{it}$ can be represented as the location parameter of an ALD with density
\[
f_{\tau}(u \mid \sigma) = \frac{\tau(1-\tau)}{\sigma} \exp\left\{ -
\rho_{\tau}\left(\frac{u}{\sigma}\right) \right\},
\]
where $u = y_{it} -a_i(\tau)-f_t -\bm{x}_{it}'\bm{\beta}_i(\tau)$, $\sigma>0$ is a scale parameter, and $\rho_{\tau}(u) = u(\tau - \mathbf{1}\{u<0\})$ denotes the check function. This distributional representation provides a likelihood-based interpretation of quantile regression, enabling straightforward Bayesian inference through standard simulation methods. An important property of the asymmetric Laplace distribution is that it admits a location–scale mixture representation involving a normal and an exponential distribution (see \cite{kozumi2011gibbs}).

Let $\mathcal{T}=\{\tau_1,\ldots,\tau_L\}$ denote the quantile grid and write the quantile residual as
\[
u_{it}(\tau) = y_{it}-a_i(\tau)-f_t-\bm{x}_{it}'\bm{\beta}_i(\tau).
\]
For each $\tau\in\mathcal{T}$ we introduce latent mixing variables $\Omega_\tau=\{\omega_{it,\tau}\}_{i\le N,t\le T}$.
Using the ALD mixture representation, for each $\tau$ define
\[
\kappa_\tau=\frac{1-2\tau}{\tau(1-\tau)}, \qquad \phi_\tau=\frac{2}{\tau(1-\tau)},
\]
and allow either a common scale $\sigma$ or quantile-specific scales $\{\sigma_\tau\}$. We adopt the unit-mean parametrisation of the mixing variable, $\omega_{it,\tau}\sim\operatorname{Exp}(1)$, under which the asymmetric Laplace residual admits the location--scale mixture
\[
u_{it}(\tau) = \kappa_\tau\sigma_\tau\omega_{it,\tau} +
\sigma_\tau\sqrt{\phi_\tau\omega_{it,\tau}}\;e_{it,\tau}, \qquad
e_{it,\tau}\sim\mathcal{N}(0,1),
\]
so that, conditionally on $\omega_{it,\tau}$, the residual is Gaussian with mean $\kappa_\tau\sigma_\tau\omega_{it,\tau}$ and variance $\sigma_\tau^2\phi_\tau\omega_{it,\tau}$. Conditionally on $(\bm{\Theta}_{\mathcal{T}},f_{1:T})$ with $\bm{\Theta}_{\mathcal{T}}=\{\bm{\theta}_i(\tau)=(a_i(\tau),\bm{\beta}_i(\tau)')'\}_{i\le N,\tau\in\mathcal{T}}$, the complete-data likelihood of $(y,\Omega_{\mathcal{T}})$ across all quantiles factorizes as
\begin{multline*}
\mathcal{L}^{\star}(y,\Omega_{\mathcal{T}}\mid
\bm{\Theta}_{\mathcal{T}},f_{1:T},\{\sigma_\tau\})
= \prod_{\tau\in\mathcal{T}} \prod_{i=1}^N\prod_{t=1}^T
\phi\Big( y_{it}; a_i(\tau)+f_t+\bm{x}_{it}'\bm{\beta}_i(\tau)+\kappa_\tau\sigma_\tau\omega_{it,\tau}, \\
\sigma_\tau^{2}\phi_\tau\omega_{it,\tau} \Big)\exp(-\omega_{it,\tau})
\end{multline*}
subject to the normalization $\sum_{t=1}^T f_t=0$. Here the latent term $\exp(-\omega_{it,\tau})$ is the $\operatorname{Exp}(1)$ density of $\omega_{it,\tau}$ and carries no dependence on $\sigma_\tau$; the only $\sigma_\tau$-dependence outside the exponent is the Gaussian normalising factor $(2\pi\sigma_\tau^2\phi_\tau\omega_{it,\tau})^{-1/2}$ retained inside $\phi(\cdot;\cdot,\cdot)$. Because $\sigma_\tau$ is itself sampled, this factor (which contributes $\sigma_\tau^{-1}$ per observation) must be kept; only constants free of all model parameters are omitted from the complete-data likelihood.

With proper initialization $f_1\sim\mathcal{N}(m_0,C_0)$, the state likelihood is
\[
\mathcal{L}_{\text{state}}(\gamma,\bm{\xi},\sigma_f^2 \mid f_{1:T},\bm{w}_{1:T}) = \phi\big(f_1;
m_0,C_0\big) \prod_{t=2}^T \phi\big(f_t; \gamma f_{t-1}+\bm{\xi}'\bm{w}_t, \sigma_f^2\big).
\]
The conditional posterior distribution of the dynamic factors $f_{1:T}$ is Gaussian on a hyperplane, and their sampling follows the standard \emph{precision sampler}~\citep{chan2023high} used in linear Gaussian state-space models. Equivalently, one may introduce an unconstrained latent process $g_t$ satisfying the state equation $g_t=\gamma g_{t-1}+\bm{\xi}'\bm{w}_t+\eta_t$ and define $f_t=g_t-\bar g$, where $\bar g=T^{-1}\sum_{t=1}^T g_t$. This centering imposes $\sum_{t=1}^T f_t=0$ without changing the interpretation of $f_t$ as a common time effect. Given Gaussian and inverse-Gamma priors, the posterior of the state equation parameters can be sampled in very standard Gibbs steps.

Stack quantile parameters over quantiles:
\[
\bm{\Theta}_i = \big(\bm{\theta}_i(\tau_1)',\ldots,\bm{\theta}_i(\tau_L)'\big)' \in \mathbb{R}^{pL},
\qquad \bm{\mu} = \big(\bm{\mu}(\tau_1)',\ldots,\bm{\mu}(\tau_L)'\big)' .
\]
Place a separable GP prior across quantiles and coefficients,
\[
\bm{\Theta}_i \sim \mathcal{N}\big(\bm{\mu}, \bm{K}_\lambda \otimes \bm{\Sigma}\big), \quad
K_\lambda(\tau,\tau')=\exp\big(-|\tau-\tau'|/\lambda\big)
\]
independently across $i$, where $\bm{K}_\lambda\in\mathbb{R}^{L\times L}$ correlates nearby quantiles and $\bm{\Sigma}\in\mathbb{R}^{p\times p}$ couples intercept/slope components.

To make the update explicit, collect the within-unit design and working response. For unit $i$ let $\bm{r}_{it}=(1,\bm{x}_{it}')'\in\mathbb{R}^p$ and $\bm{R}_i=(\bm{r}_{i1},\ldots,\bm{r}_{iT})'\in\mathbb{R}^{T\times p}$, and for each quantile $\tau_\ell$ define the working response and working precision
\[
\tilde y_{it,\ell}=y_{it}-f_t-\kappa_{\tau_\ell}\sigma_{\tau_\ell}\omega_{it,\ell}, \qquad
\bm{W}_{i\ell}=\operatorname{diag}\left[\frac{1}{\sigma_{\tau_\ell}^2\phi_{\tau_\ell}\omega_{it,\ell}}\right]_{t=1}^T,
\]
so that, conditional on the latent scales, the mixture representation gives $\tilde{\bm{y}}_{i\ell}=\bm{R}_i\bm{\theta}_i(\tau_\ell)+\bm{\varepsilon}_{i\ell}$ with $\bm{\varepsilon}_{i\ell}\sim\mathcal{N}(0,\bm{W}_{i\ell}^{-1})$. Because the likelihood factorizes across quantiles while the GP prior couples them, the conditional posterior of $\bm{\Theta}_i$ is Gaussian,
\[
\bm{\Theta}_i\mid\cdot\sim\mathcal{N}(\bar{\bm{m}}_i,\bar{\bm{V}}_i),
\]
with
\[
\bar{\bm{V}}_i^{-1} = (\bm{K}_\lambda^{-1}\otimes\bm{\Sigma}^{-1}) +
\operatorname*{blockdiag}_{\ell=1}^L\big(\bm{R}_i'\bm{W}_{i\ell}\bm{R}_i\big), \qquad \bar{\bm{m}}_i = \bar{\bm{V}}_i\Big[ (\bm{K}_\lambda^{-1}\otimes\bm{\Sigma}^{-1})\bm{\mu} +
\operatorname*{vec}_{\ell}\big\{\bm{R}_i'\bm{W}_{i\ell}\tilde{\bm{y}}_{i\ell}\big\} \Big].
\]
The prior precision $\bm{K}_\lambda^{-1}\otimes\bm{\Sigma}^{-1}$ borrows strength across adjacent quantiles, while the block-diagonal data precision $\bm{R}_i'\bm{W}_{i\ell}\bm{R}_i$ enters quantile by quantile; both are of modest dimension ($pL$), so the draw is a single multivariate Gaussian sample.

Stack all Bernstein coefficients into
\[
\bm{\Gamma} =
\begin{bmatrix}
\bm{\vartheta}_1'\\[-2pt]
\vdots\\[-2pt]
\bm{\vartheta}_p'
\end{bmatrix}
\in\mathbb{R}^{p\times(M+1)}.
\]
Let
\[
\bm{\gamma} = \operatorname{vec}(\bm{\Gamma}') \in \mathbb{R}^{p(M+1)}.
\]
The mean path $\bm{\mu}(\tau)=\bm{\Gamma}\bm{\Phi}(\tau)$ is monotone in $\tau$ when each row of $\bm{\Gamma}$ is nondecreasing, i.e.
\[
\bm{D}\bm{\Gamma}' \succeq 0, \qquad \bm{D} =
\begin{bmatrix}
-1 & 1 & 0 & \cdots & 0\\
0 & -1 & 1 & \cdots & 0\\
\vdots & & \ddots & \ddots & \vdots\\
0 & \cdots & 0 & -1 & 1
\end{bmatrix}
\in\mathbb{R}^{M\times(M+1)}.
\]
The row-wise monotonicity constraints are equivalent to
\[
(\bm{I}_p \otimes \bm{D})\bm{\gamma} \succeq 0.
\]
We therefore assign a joint truncated Gaussian prior
\[
\bm{\gamma} \sim \mathcal{N}_{\mathcal{C}}\big(\bm{m}_\Gamma,\bm{V}_\Gamma\big), \qquad
\mathcal{C}=\{\bm{\gamma}\in\mathbb{R}^{p(M+1)}:(\bm{I}_p\otimes \bm{D})\bm{\gamma}\succeq 0\},
\]
with density (up to normalization)
\[
\pi(\bm{\Gamma}) \propto \phi\big(\bm{\gamma};\bm{m}_\Gamma,\bm{V}_\Gamma\big)
\mathbf{1}\{(\bm{I}_p\otimes \bm{D})\bm{\gamma}\succeq 0\},
\]
where $\bm{m}_\Gamma$ and $\bm{V}_\Gamma$ are the mean vector and covariance of a multivariate Gaussian prior.

The conditional posterior of $\bm{\gamma}$ follows from the same conjugate structure. Writing the population path as a linear map of $\bm{\gamma}$, $\bm{\mu}(\tau_\ell)=(\bm{I}_p\otimes\bm{\Phi}(\tau_\ell)')\bm{\gamma}$, stack these maps over the quantile grid into the design
\[
\bm{\mu}=\bm{\Psi}\bm{\gamma}, \qquad \bm{\Psi}=
\big[\bm{I}_p\otimes\bm{\Phi}(\tau_1)',\ldots,\bm{I}_p\otimes\bm{\Phi}(\tau_L)'\big]'
\in\mathbb{R}^{pL\times p(M+1)} .
\]
Since $\bm{\Theta}_i\mid\bm{\gamma}\sim\mathcal{N}(\bm{\Psi}\bm{\gamma},\bm{K}_\lambda\otimes\bm{\Sigma})$ independently across $i=1,\ldots,N$, combining the units with the truncated Gaussian prior yields
\[
\bm{\gamma}\mid\cdot\sim \mathcal{N}_{\mathcal C}\big(\bar{\bm{m}}_\Gamma,\bar{\bm{V}}_\Gamma\big),
\qquad \mathcal C=\{\bm{\gamma}:(\bm{I}_p\otimes \bm{D})\bm{\gamma}\succeq 0\},
\]
where
\[
\bar{\bm{V}}_\Gamma^{-1} = \bm{V}_\Gamma^{-1} + N\bm{\Psi}'(\bm{K}_\lambda^{-1}\otimes\bm{\Sigma}^{-1})\bm{\Psi}, \qquad
\bar{\bm{m}}_\Gamma = \bar{\bm{V}}_\Gamma\Big[ \bm{V}_\Gamma^{-1}\bm{m}_\Gamma +
\bm{\Psi}'(\bm{K}_\lambda^{-1}\otimes\bm{\Sigma}^{-1})\sum_{i=1}^N\bm{\Theta}_i \Big].
\]
This is the unconstrained Gaussian update $\mathcal{N}(\bar{\bm{m}}_\Gamma,\bar{\bm{V}}_\Gamma)$ truncated to the monotone cone $\mathcal C$.

Sampling from the joint truncated Gaussian can be challenging, as the linear inequality constraints define a high-dimensional convex polytope with correlated components. Fortunately, the dimensionality of $\bm{\Gamma}$ in our implementation is modest, since the Bernstein basis representation already reduces the parameter space to a small set of smooth coefficients. This makes it feasible to employ the efficient algorithm of \citet{botev2017normal}, which provides exact and numerically stable simulation from multivariate truncated normal distributions using minimax tilting. As a result, posterior sampling of $\bm{\Gamma}$ remains computationally tractable even under joint monotonicity constraints.

\section{Climate Shocks and International Downside Risk}
\label{sec:climate}

Standard empirical work on the relationship between climate shocks and the macroeconomy estimates the average response of macroeconomic quantities such as output growth or inflation to temperature shocks. The implicit assumption is that climate shocks shift the centre of the distribution of a macroeconomic target without altering its shape \citep{colacito2019temperature, 10.1093/qje/qjag011}.
This assumption is questionable on both theoretical and empirical grounds. Real-options theory predicts that uncertainty about future conditions suppresses irreversible investment \citep{bloom2018really}, depressing the lower tail of growth realisations without necessarily moving the median \citep{jovanovic2022uncertainty, huang2024financial}. Similarly, international transmission channels may amplify climate tail risks heterogeneously, with commodity importers bearing the bulk of downside risk from global climate disturbances while exporters of affected commodities enjoy countercyclical terms-of-trade gains. These considerations motivate studying how climate shocks impact the conditional distribution of, e.g., output growth rather than their average effects alone.

In this section, we apply the Bayesian quantile panel model of Section \ref{sec: econometric_framework} and use it to analyze how temperature shocks and temperature volatility shocks shape the entire conditional distribution of output growth across 33 countries over 1979Q3--2023Q3. To do so, we cast the quantile panel model in a local projections framework and distinguish between \emph{level} shocks---unexpected deviations in temperature---and \emph{volatility} shocks---unexpected increases in the variance of temperature fluctuations---and within each type between \emph{local} (idiosyncratic) and \emph{global} (common) components. This two-by-two decomposition allows us to test three nested hypotheses: (i) temperature level shocks generate asymmetric downside risks relative to the mean; (ii) temperature volatility shocks compound this asymmetry, shifting the left tail independently of the median; and (iii) the global component of climate shocks constitutes a systemic, non-diversifiable risk that is absent from country-specific models.

\subsection{International macroeconomic data and the measurement of temperature shocks}
\label{subsec:climate_data}
We construct a balanced quarterly panel of 33 countries spanning the period 1979Q3--2023Q3.\footnote{The original raw data from \citet{mohaddes2024compilation} start in 1979Q2. However, since GDP is transformed by taking log differences, the first usable observation is 1979Q3.} Macroeconomic data are sourced from the 2023 vintage of the GVAR database compiled by \citet{mohaddes2024compilation}, which draws primarily on Haver Analytics, the International Monetary Fund's International Financial Statistics (IFS), and Bloomberg. The 33 countries are organised into five regional groups---(1) Asia-Pacific, (2) North America, (3) South America, (4) the Middle East and Africa, and (5) Europe---and together account for more than 90 per cent of world GDP \citep{mohaddes2024compilation}.\footnote{The countries in each group, together with the abbreviations used in the remainder of the paper, are as follows. \emph{Asia-Pacific}: Australia (AUS), China (CHN), India (IND), Indonesia (IDN), Japan (JPN), Korea (KOR), Malaysia (MYS), New Zealand (NZL), the Philippines (PHL), Singapore (SGP), and Thailand (THA). \emph{North America}: Canada (CAN), Mexico (MEX), and the United States (US). \emph{South America}: Argentina (ARG), Brazil (BRA), Chile (CHL), and Peru (PER). \emph{Middle East and Africa}: Saudi Arabia (SAU) and South Africa (ZAF). \emph{Europe}: Austria (AUT), Belgium (BEL), Finland (FIN), France (FRA), Germany (DEU), Italy (ITA), the Netherlands (NLD), Norway (NOR), Spain (ESP), Sweden (SWE), Switzerland (CHE), Turkey (TUR), and the United Kingdom (UK).} A full list of countries is provided in Table~\ref{tab:countries} in the Data Appendix.

From this database, we extract five macroeconomic variables: real GDP, inflation, the short-term interest rate, the real exchange rate, and a global oil price index. As is standard in the local projections literature, real GDP is transformed into a cumulative annualised growth rate at each horizon $h$, defined as
\begin{equation}
    y_{i,t+h} = \frac{400}{h} \sum_{j=1}^{h} \Delta \log(\text{GDP}_{i,t+j}),
    \label{eq:gdp}
\end{equation}
where the factor of 400 converts log-differences of quarterly data to annualised percentage points.

Climate data are obtained from the unified country-level temperature dataset of \citet{gortan2024unified},\footnote{Available at https://weightedclimatedata.streamlit.app.} which provides monthly average surface temperatures for each country in our sample. Since the raw temperature series exhibit strong seasonal patterns, we remove seasonality prior to constructing our shock and volatility measures. Following \citet{colacito2019temperature}, we regress each country's temperature series on a full set of monthly dummy variables and retain the residuals as the deseasonalised temperature series. These residuals capture deviations from the estimated seasonal cycle and are free of predictable calendar variation. The deseasonalised monthly series are then aggregated to quarterly frequency to align with the macroeconomic data.

Temperature shocks are constructed following the autoregressive filtering approach of \citet{10.1093/qje/qjag011}. For each country~$i$, we estimate the regression
\begin{equation}
    \widehat{T}_{it}
    = T_{it} - \Bigl(\hat{\alpha}_i
    + \hat{\beta}_{i,1}\,T_{i,t-h}
    + \cdots
    + \hat{\beta}_{i,P+1}\,T_{i,t-h-P}\Bigr),
    \label{eq:ts}
\end{equation}
where $T_{it}$ denotes the deseasonalised temperature for country~$i$ in period~$t$, $\hat{\beta}_{i,j}$ is the coefficient estimate on the lag of order $h + j - 1$, and $\widehat{T}_{it}$ is the resulting temperature shock, defined as the residual deviation from the estimated autoregressive component. In their original annual-frequency analysis, \citet{10.1093/qje/qjag011} set $P = 2$ and $h = 2$. Since our empirical analysis is conducted at the quarterly frequency, we set $P = 8$ and $h = 8$, which preserves the same span of dependence in calendar time. Each country's AR coefficients are estimated by OLS over the full sample. The resulting residuals $\widehat{T}_{it}$ measure quasi-exogenous temperature surprises after removing predictable inertia in the local temperature process.

Figure~\ref{fig:tempshock} displays the estimated temperature shock series for a selection of advanced and emerging market economies. The series exhibit considerable cross-country heterogeneity, both in the magnitude and persistence of shocks, consistent with the view that climate exposures differ markedly across regions and development stages.

\begin{figure}[htbp]
    \centering
    \includegraphics[width=\textwidth]{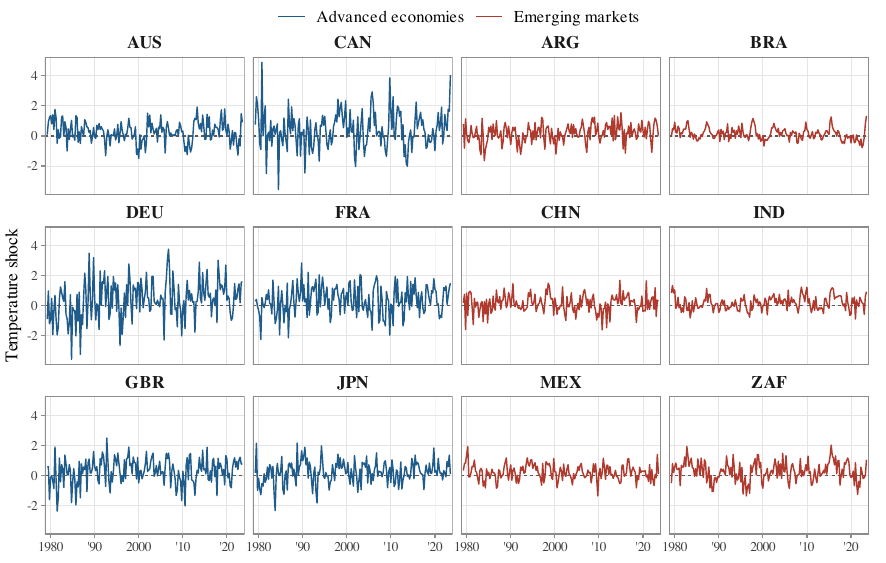}
    \caption*{\scriptsize \textit{Notes}: The Advanced Economies are Australia, Canada, Germany, France, UK and Japan. The Emerging Economies are Argentina, Brazil, China, India, Mexico and South Africa.}
    \caption{Temperature shocks for selected advanced and emerging market economies.}
    \label{fig:tempshock}
\end{figure}
Temperature volatility is measured as the realised volatility of the temperature shock series, constructed using a rolling window estimator. Specifically, for each country, we compute
\begin{equation}
    \widehat{\sigma}_{it}^{\text{RV}} = \sqrt{\frac{1}{4}
    \sum_{j=0}^{3} \left(\widehat{T}_{i,t-j}\right)^{2}},
    \label{eq:rv}
\end{equation}
where $\widehat{T}_{i,t-j}$ denotes the temperature shock defined in equation~\eqref{eq:ts}, and $\widehat{\sigma}_{it}^{\text{RV}}$ is the root mean square (RMS) of shocks over the preceding four quarters. This estimator is analogous to a time-varying standard deviation under the assumption of zero mean shocks, and captures within-year fluctuations in climate variability over a one-year rolling window.

Figure~\ref{fig:tempvol} displays the estimated temperature volatility series for the same selection of advanced and emerging market economies shown in Figure~\ref{fig:tempshock}. As with the shock series, substantial cross-country heterogeneity is evident, though the ordering broadly tracks latitude rather than income: volatility is highest in the mid- and high-latitude economies of the sample, such as Germany and Canada, and lowest in the tropical ones, such as Brazil and India.

\begin{figure}[H]
    \centering
    \includegraphics[width=\textwidth]{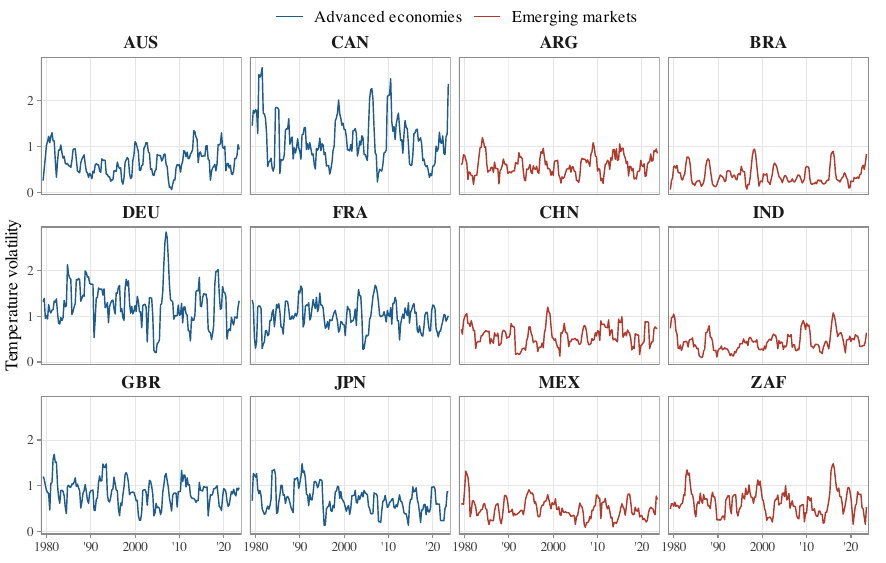}
    \caption*{\scriptsize \textit{Notes}: The Advanced Economies are Australia, Canada, Germany, France, UK and Japan. The Emerging Economies are Argentina, Brazil, China, India, Mexico and South Africa. Temperature volatility is computed as the root mean square of temperature shocks over a rolling window of four quarters, as defined in equation~\eqref{eq:rv}.}
    \caption{Temperature volatility for selected advanced and emerging market economies.}
    \label{fig:tempvol}
\end{figure}

To distinguish domestic climate conditions from common global climate disturbances, we decompose the temperature shock series defined in equation~\eqref{eq:ts} into local and global components. The global temperature shock is constructed as the GDP-weighted cross-sectional average
\begin{equation}
    \widehat{T}^{G}_{t} = \sum_{i=1}^{N} \omega_i
    \widehat{T}_{it},
    \qquad \sum_{i=1}^{N} \omega_i = 1,
    \label{eq:globalshock}
\end{equation}
where $\widehat{T}_{it}$ is the temperature shock for country $i$ in period $t$, and $\omega_i$ denotes country $i$'s GDP weight. The local temperature shock is then defined as the deviation of each country's shock from the global component,
\begin{equation}
    \widehat{T}^{L}_{it} = \widehat{T}_{it} -
    \widehat{T}^{G}_{t}.
    \label{eq:localshock}
\end{equation}
Each country's shock therefore splits additively into the two components, $\widehat{T}_{it} = \widehat{T}^{L}_{it} + \widehat{T}^{G}_{t}$, with the local pieces satisfying $\sum_{i=1}^{N} \omega_i \widehat{T}^{L}_{it} = 0$ by construction. The undecomposed shock $\widehat{T}_{it}$ thus sits one level above the local--global split, which is why it carries no superscript.

An analogous decomposition is applied to the realised volatility measure defined in equation~\eqref{eq:rv},
\begin{equation}
    \widehat{\sigma}^{G,\text{RV}}_{t} = \sum_{i=1}^{N} \omega_i
    \widehat{\sigma}_{it}^{\text{RV}}, \qquad
    \widehat{\sigma}^{L,\text{RV}}_{it} = \widehat{\sigma}_{it}^{\text{RV}}
    - \widehat{\sigma}^{G,\text{RV}}_{t}.
    \label{eq:localvol}
\end{equation}
Following \citet{mohaddes2024compilation}, GDP weights are computed using each country's average GDP at purchasing power parity (PPP) over the period 2014--2016, ensuring that the global factor reflects the relative economic size of each country in our panel.

In what follows, we let each climate disturbance carry a pair of superscripts. The first records whether the shock is local ($L$) or global ($G$) while the second indicates whether it is a temperature level shock ($T$) or a temperature volatility shock ($V$). The four regressors defined above are therefore $\widehat{T}^{L}_{it}$ and $\widehat{T}^{G}_{t}$ for the level shocks, and $\widehat{\sigma}^{L,\text{RV}}_{it}$ and $\widehat{\sigma}^{G,\text{RV}}_{t}$ for the volatility shocks. We index the resulting shock types by $s \in \mathcal{S} = \{(L,T), (G,T), (L,V), (G,V)\}$.

\begin{figure}[h!]
    \centering
    \includegraphics[width=\textwidth]{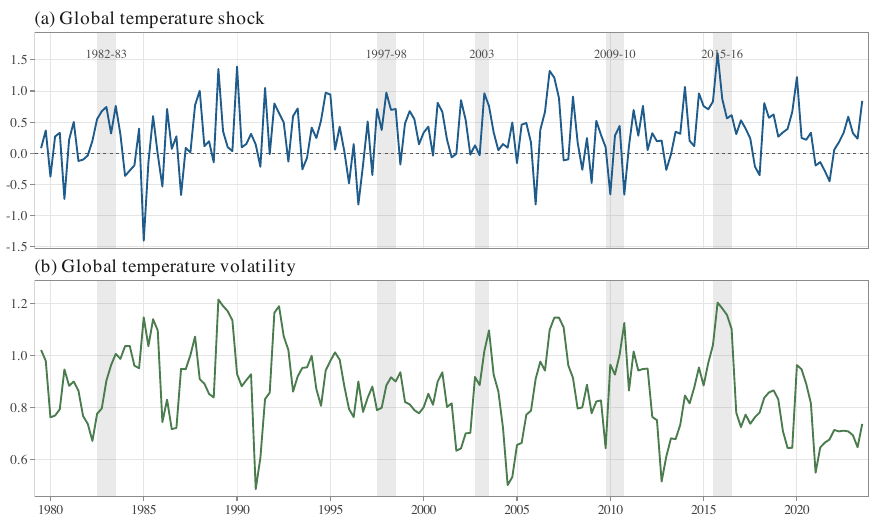}
    \caption*{\scriptsize \textit{Notes}: The global temperature shock and volatility series are constructed as GDP-PPP-weighted cross-sectional averages of the country-level measures defined in equations~\eqref{eq:ts} and~\eqref{eq:rv}, respectively, following equations~\eqref{eq:globalshock} and~\eqref{eq:localvol}. Shaded bands denote major El Ni\~{n}o episodes: 1982--83, 1997--98, 2002--03, 2009--10, and 2015--16.}
    \caption{Global temperature shocks and volatility, 1979Q3--2023Q3.}
    \label{fig:tempglobal}
\end{figure}

Figure~\ref{fig:tempglobal} plots the constructed global temperature shock and volatility series over the sample period 1979Q3--2023Q3. Five shaded bands indicate episodes of major El Ni\~{n}o events: 1982--83, 1997--98, 2002--03, 2009--10, and 2015--16. These episodes coincide with pronounced spikes in both the shock and volatility series.

\subsection{Model specification}
\label{subsec:climate_specification}
For each horizon $h = 1, \ldots, H$, we estimate the following panel quantile local projection:
\begin{equation}
\begin{aligned}
    Q_{\tau}(y_{i,t+h} \mid \mathcal{F}_t)
    =  & \alpha_{i,h}(\tau) + f_{t,h}  + \widehat{T}^{L}_{it}\theta^{L,T}_{i,h}(\tau)
      + \widehat{T}^{G}_{t}\theta^{G,T}_{i,h}(\tau)  \\
    & + \widehat{\sigma}^{L,\text{RV}}_{it}\theta^{L,V}_{i,h}(\tau)
      + \widehat{\sigma}^{G,\text{RV}}_{t}\theta^{G,V}_{i,h}(\tau)
      + \bm{x}_{it}'\bm{\beta}_{i,h}(\tau),
\end{aligned}
\label{eq:qLP}
\end{equation}
where $y_{i,t+h}$ is the cumulative annualised GDP growth rate defined in equation~\eqref{eq:gdp}, $Q_{\tau}(\cdot \mid \mathcal{F}_t)$ denotes the $\tau^{\text{th}}$ conditional quantile given the information set $\mathcal{F}_t$, $\alpha_{i,h}(\tau)$ is a country-specific intercept, and $f_{t,h}$ is a horizon-specific common time effect that is \emph{invariant across quantiles}, exactly as in the model of Section~\ref{sec: econometric_framework}. Consistent with that specification, $f_{t,h}$ is modelled as a \emph{restricted} latent factor driven by the observed global predictors $\bm{w}_t$ (an AR(1) process with innovation variance $\sigma_f^2$), rather than as an unrestricted set of time dummies.

This restriction is not merely cosmetic; it is required for identification of the global climate effects. The global regressors $\widehat{T}^{G}_{t}$ and $\widehat{\sigma}^{G,\text{RV}}_{t}$ vary only over time and are common to all countries, so they are collinear with any \emph{unrestricted} time effect. Writing the unit-specific loading as a level plus a deviation, $\theta^{G,T}_{i,h}(\tau)=\bar\theta^{G,T}_{h}(\tau)+\tilde\theta^{G,T}_{i,h}(\tau)$, the common component $\widehat{T}^{G}_{t}\bar\theta^{G,T}_{h}(\tau)$ is a pure function of $t$ and would be absorbed by a full set of time dummies, leaving the \emph{average} global climate effect unidentified; only the heterogeneous deviations $\widehat{T}^{G}_{t}\tilde\theta^{G,T}_{i,h}(\tau)$ would survive. By contrast, the restricted factor $f_{t,h}$ lives in a low-dimensional space and does not span the time variation in $\widehat{T}^{G}_{t}$, so both the average and heterogeneous global effects are identified. Because $f_{t,h}$ retains a free innovation $\eta_t$ each period, this separation is achieved through the AR(1) restriction together with the prior on $\sigma_f^2$ and the heterogeneity of the loadings, rather than through the likelihood alone; we therefore keep $\sigma_f^2$ tightly centred and exclude the global climate regressors from $\bm{w}_t$. The local and global climate regressors $\widehat{T}^{L}_{it}$, $\widehat{T}^{G}_{t}$, $\widehat{\sigma}^{L,\text{RV}}_{it}$, and $\widehat{\sigma}^{G,\text{RV}}_{t}$ are constructed as described in equations~\eqref{eq:localshock}--\eqref{eq:localvol}, and $\tau \in \mathcal{T} = \{0.01, 0.02, 0.03, \ldots, 0.98, 0.99\}$ is the quantile index. We estimate equation~\eqref{eq:qLP} separately for each horizon $h = 1, \ldots, 10$ quarters.

A separate concern with the local projection specification in equation~\eqref{eq:qLP} is that constructing $y_{i,t+h}$ as a cumulative $h$-quarter growth rate mechanically induces a moving-average structure of order $h-1$ in the regression residuals, since observations at adjacent origin dates $t$ share overlapping periods of the underlying growth process. We address this concern via simulation rather than through an explicit correction: Appendix~\ref{subsec:MA_simulation}
evaluates whether posterior credible interval coverage is sensitive to this induced dependence, using a Monte Carlo design that replicates the overlapping-horizon structure while holding the true quantile coefficients fixed across $h$. Coverage remains close to the nominal 90 per cent level at every horizon and coefficient (0.89--0.91 across $R=100$ replications), with no systematic decline as $h$ increases, providing no evidence that the induced MA($h-1$) structure distorts inference in our setting.

The baseline control vector is
\begin{equation}
    \bm{x}_{it} = \bigl[
        y_{it},\ \pi_{it},\ \Delta rer_{it}
    \bigr]', \qquad
    \bm{w}_t = \bigl[\text{oilp}_{t},\ r^{*}_{t}\bigr]',
    \label{eq:controls}
\end{equation}
where $y_{it}$ denotes lagged output growth, $\pi_{it}$ is inflation, $\Delta rer_{it}$ is the change in the real exchange rate, $\text{oilp}_{t}$ is the global oil price, and $r^{*}_{t}$ is the global short-term interest rate. The vector $\bm{w}_t$ collects the observed aggregate predictors that drive the common factor $f_{t,h}$, as in Section~\ref{sec: econometric_framework}; it does not enter equation~\eqref{eq:qLP} directly. All variables are sourced from the GVAR database of \citet{mohaddes2024compilation} as described in the previous section.

The coefficients of interest are $\theta^{L,T}_{i,h}(\tau)$, $\theta^{G,T}_{i,h}(\tau)$, $\theta^{L,V}_{i,h}(\tau)$, and $\theta^{G,V}_{i,h}(\tau)$, which measure the effect of local temperature shocks, global temperature shocks, local temperature volatility, and global temperature volatility, respectively, on the $\tau^{\text{th}}$ quantile of the conditional distribution of future output growth at horizon $h$. Estimating equation~\eqref{eq:qLP} across the full quantile grid $\mathcal{T}$ traces out the entire conditional distribution of $y_{i,t+h}$, allowing us to assess whether climate shocks shift the mean, compress or widen the distribution, or generate asymmetric tail risks.

Prior to estimation, all four climate regressors---$\widehat{T}^{L}_{it}$, $\widehat{T}^{G}_{t}$, $\widehat{\sigma}^{L,\text{RV}}_{it}$, and $\widehat{\sigma}^{G,\text{RV}}_{t}$---are standardised to have unit variance in the cross-time distribution. This scaling means that estimated coefficients in equation~\eqref{eq:qLP} below are interpretable as the response to a one-standard-deviation climate disturbance, facilitating direct comparison across shock types.

\subsection{Baseline Distributional Responses}
\label{subsec:climate_baseline_results}

We now present the baseline distributional impulse responses of cumulative output growth to each of the four climate shocks. For shock type $s$, horizon $h$, and quantile $\tau$, the distributional impulse response function (DIRF) is defined directly from the estimated coefficients of equation~\eqref{eq:qLP}:
\begin{equation}
    \text{DIRF}^{s}_{i,h}(\tau) = \theta^{s}_{i,h}(\tau).
    \label{eq:DIRF}
\end{equation}
At a given $\tau$, equation~\eqref{eq:DIRF} is the response of the $\tau^{\text{th}}$ conditional quantile. The distributional response is the collection of these ordinates across the full grid, $\{\theta^{s}_{i,h}(\tau)\}_{\tau \in \mathcal{T}}$, since it is the entire coefficient path that determines how the shape of the predictive distribution changes; this path is what we report in the figures below.

To summarise the cross-country evidence, we report the panel-average response
\begin{equation}
    \overline{\text{DIRF}}^{\,s}_{h}(\tau)
    = \frac{1}{N} \sum_{i=1}^{N} \text{DIRF}^{s}_{i,h}(\tau),
    \label{eq:avgDIRF}
\end{equation}
which aggregates country-level responses into a single summary measure at each quantile $\tau \in \mathcal{T}$ and horizon $h$. We report posterior medians together with 68 and 90 per cent credible bands throughout. The central hypothesis of this paper is that temperature volatility shocks generate disproportionately large contractions in the lower tail of the output growth distribution relative to the median, that is,
\begin{equation}
    \overline{\text{DIRF}}^{\,s}_{h}(0.10)
    < \overline{\text{DIRF}}^{\,s}_{h}(0.50), \qquad s \in \{(L,V), (G,V)\},
    \label{eq:hypothesis}
\end{equation}
particularly at medium-run horizons ($h = 4$--$10$ quarters). Evidence in favour of equation~\eqref{eq:hypothesis} would indicate that temperature volatility acts primarily as a tail risk, compressing downside outcomes without necessarily shifting the median forecast, consistent with the uncertainty channel identified in the climate-macroeconomics literature.

\begin{figure}[h!]
    \centering
    \includegraphics[width=\textwidth]{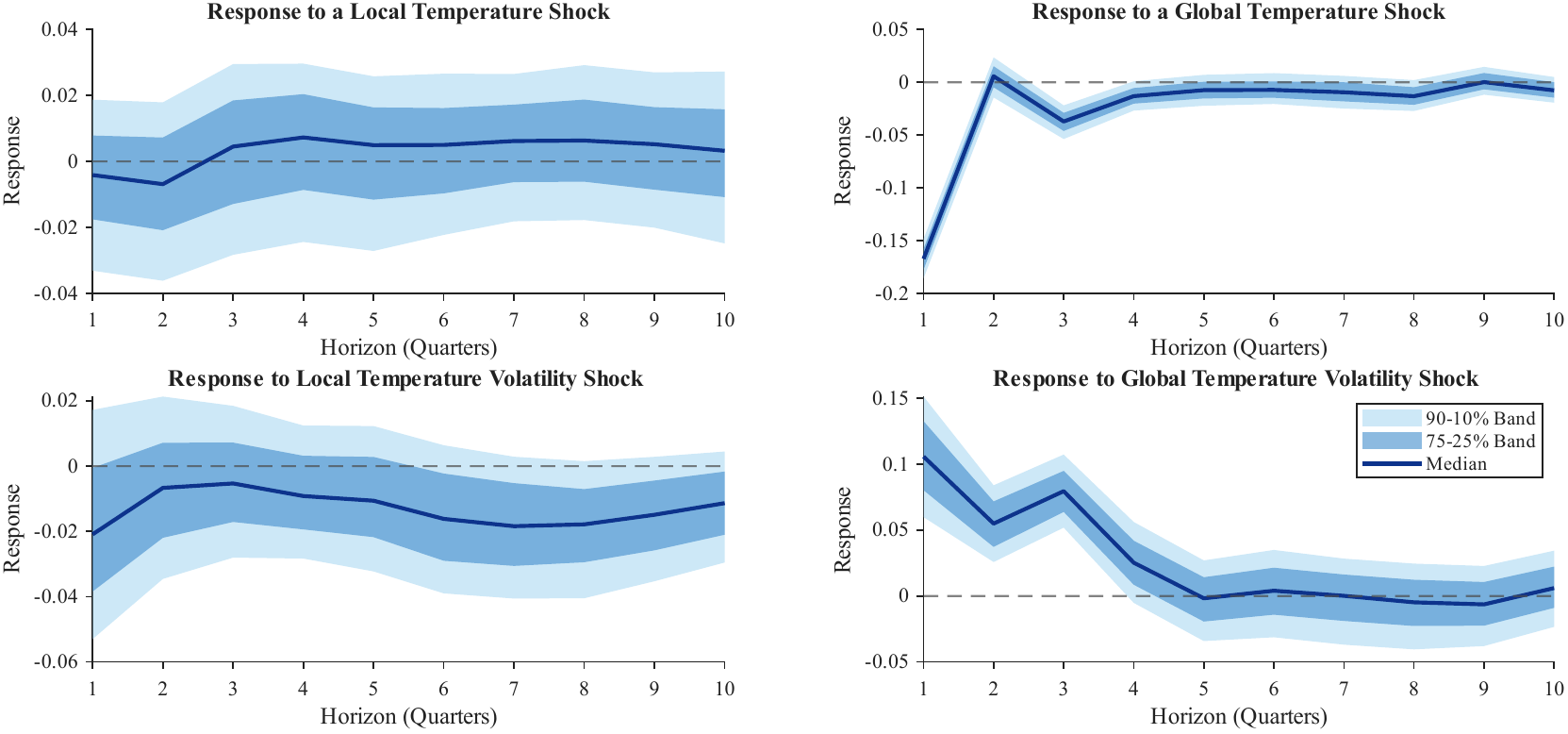}
    \caption*{\scriptsize \textit{Notes}: Each panel plots $\overline{\text{DIRF}}^{\,s}_{h}(\tau)$ as defined in equation~\eqref{eq:avgDIRF} for $h = 1, \ldots, 10$ quarters. The solid line is the median response ($\tau = 0.50$). Dark and light shaded bands correspond to the 25$^{\text{th}}$--75$^{\text{th}}$ and 10$^{\text{th}}$--90$^{\text{th}}$ quantile responses, respectively. Panels report responses to: (a) local temperature shock; (b) global temperature shock; (c) local temperature volatility shock; (d) global temperature volatility shock.}
    \caption{Panel-average distributional impulse responses to local and global climate shocks.}
    \label{fig:AverageIRFs}
\end{figure}

Figure~\ref{fig:AverageIRFs} plots $\overline{\text{DIRF}}^{\,s}_{h}(\tau)$ across forecast horizons $h = 1,\ldots,10$ quarters for each of the four climate shocks. Several findings emerge. First, local temperature shocks generate modest and broadly symmetric responses across the conditional distribution, with the effects centred around zero and the 90--10 per cent quantile range spanning approximately $[-0.04,0.04]$ percentage points. This finding is consistent with the existing literature, which documents substantial heterogeneity in the economic effects of temperature shocks across countries. In particular, the adverse consequences of temperature increases appear to be concentrated in countries with hotter climates and lower levels of income, where economic activity is more vulnerable to climatic conditions (see \citet{sachs1997sources, jones2010climate, dell2012temperature, burke2015global}).

In contrast, global temperature shocks generate more pronounced negative effects, particularly in the lower tail of the conditional distribution. The response at $\tau=0.10$ declines to approximately $-0.2$ percentage points at short-to-medium horizons, while responses at higher quantiles remain close to zero or mildly positive. This pattern indicates the presence of asymmetric downside risks associated with the global climate disturbances, whereby adverse climate developments disproportionately affect weaker economic outcomes. This result is consistent with the findings of \citet{kahn2021long}, who show that persistent increases in temperatures above historical norms are associated with significant reductions in long-run economic growth.

Third, local temperature volatility shocks produce a modest response that is skewed toward the lower tail, with the 90--10 per cent band spanning approximately $[-0.06,0.02]$ percentage points and widening somewhat at longer horizons. This asymmetry suggests that increases in local climate uncertainty are associated with a mild downward shift in the lower tail of the output growth distribution, rather than a purely symmetric widening of uncertainty around an unchanged centre. Finally, global temperature volatility shocks display a different pattern, generating a positive response in output growth in the short term that dissipates over the medium to long term.

One possible explanation for this unexpected positive response is that global temperature volatility may partly capture broader fluctuations associated with global commodity markets. For example, \citet{brunner2002nino} show that El Niño--Southern Oscillation events can raise world commodity prices and affect global real GDP through international transmission channels. A positive global temperature volatility shock may therefore generate heterogeneous effects across countries through commodity-price and terms-of-trade channels. Countries that benefit from higher commodity prices may experience stronger output responses, which could explain the positive effects observed in the upper quantiles of the distribution.

Turning to the central hypothesis in equation~\eqref{eq:hypothesis}, the evidence presents a mixed picture at the panel level. While lower quantile responses lie below the median for local volatility shocks at most horizons, the global volatility shock generates the opposite pattern, with upper quantiles diverging sharply from the median. This tension between the two volatility components suggests that the aggregate panel average may conceal systematic differences across economies at different stages of development. We investigate this possibility below by examining country-level responses separately for advanced and emerging market economies.

\subsection{Cross-Country Heterogeneity}
\label{subsec:heterogeneity}

The panel-average responses in Figure~\ref{fig:AverageIRFs} provide a useful summary of the aggregate evidence but may obscure considerable heterogeneity across economies. To examine this, we present country-level distributional impulse responses for a selection of ten advanced economies (AE) --- Australia, Canada, France, Germany, Italy, Japan, Norway, Spain, the United Kingdom, and the United States --- and ten emerging market economies (EM) --- Argentina, Brazil, China, India, Indonesia, Korea, Mexico, South Africa, Thailand, and Turkey. Figures~\ref{fig:IRFlocaltemp}--\ref{fig:IRFglobalvol} plot $\text{DIRF}^{s}_{i,h}(\tau)$ as defined in equation~\eqref{eq:DIRF} for each country, shock, and quantile across horizons $h = 1, \ldots, 10$ quarters.

\paragraph{Local temperature shocks.}
Figure~\ref{fig:IRFlocaltemp} shows that responses to local temperature shocks are generally modest for advanced economies, with 90--10 per cent quantile bands typically spanning approximately $[-0.02, 0.04]$ percentage points; Germany, Japan, and Norway exhibit somewhat wider bands, of approximately $[-0.05, 0.05]$. This is broadly consistent with idiosyncratic domestic temperature fluctuations having limited systematic macroeconomic impact in developed economies. Emerging markets display substantially greater cross-country dispersion: India exhibits markedly wider quantile bands than any advanced economy, spanning approximately $[-0.10, 0.10]$, while Thailand and Turkey show comparably wide bands. China is notable for a band that lies entirely in non-negative territory. These results indicate that local temperature shocks generate meaningful, but heterogeneous, downside risks across several emerging economies, consistent with \citet{berg2024gdp}, who similarly document considerable heterogeneity in the sign and magnitude of GDP responses to idiosyncratic temperature shocks across countries, with the direction of the response varying systematically with country characteristics such as income, educational attainment, and trade openness.

\paragraph{Global temperature shocks.}
Figure~\ref{fig:IRFglobaltemp} shows that global temperature shocks generate a negative response across essentially all economies. Most countries -- including Canada, France, Germany, Italy, Japan, Norway, Spain, the United Kingdom, and the emerging markets in the sample -- exhibit quantile bands extending to approximately $-0.20$ percentage points. Australia and the United States are notable exceptions, with substantially narrower bands not exceeding approximately $-0.15$ and $-0.10$ respectively, suggesting a comparatively muted response. Indonesia is the only country whose band extends into positive territory. These results indicate that global temperature shocks generate a broadly shared downside risk to output growth across both advanced and emerging economies, consistent with \citet{10.1093/qje/qjag011} and \citet{byrne2024macroeconomic}, who similarly find that global climate disturbances generate negative macroeconomic effects shared broadly across both advanced and emerging economies, in contrast to the more limited and heterogeneous effects of country-specific temperature shocks.

\paragraph{Local temperature volatility shocks.}
Figure~\ref{fig:IRFlocalvol} shows that, with the exception of a handful of countries such as France, the United Kingdom, the United States, India, Indonesia, and Mexico, most economies exhibit a negative median response to a local temperature volatility shock in the short term, consistent with \citet{alessandri2025macroeconomic}, who show that a rise in temperature risk -- controlling for temperature levels -- is followed by a period of both lower and more volatile GDP growth. Among advanced economies, Canada, Italy, Japan, and Spain exhibits persistently negative median responses through much of the horizon, while Australia, France, the United Kingdom, and the United States display a mildly positive median response instead. Among emerging markets, the negative short-term response is more pronounced: Argentina, Brazil, Korea, and Thailand exhibit sharply negative median responses on impact, with Brazil falling to approximately $-0.11$ percentage points, while China and Turkey begin negative before reverting toward positive territory at longer horizons. By contrast, India, Indonesia, and Mexico display persistently positive median responses throughout. A negative short-term response to local temperature volatility is thus the dominant pattern across the panel, though the persistence and magnitude of this response varies considerably by country, and the negative lower-tail responses observed for several emerging markets provide some country-level support for the hypothesis in equation~\eqref{eq:hypothesis}, even where the panel average does not strongly confirm it.

\paragraph{Global temperature volatility shocks.}
Finally, figure~\ref{fig:IRFglobalvol} shows that a global temperature volatility shock generates a large positive response on impact for every economy in the panel, with median responses ranging from approximately $0.07$ to $0.20$ percentage points on impact, before decaying over the following four to five quarters. This impact-horizon uniformity is not fully explained by the commodity-price channel discussed above and is left as an open question for future work.

\begin{figure}[H]
    \centering
    \includegraphics[width=\textwidth]{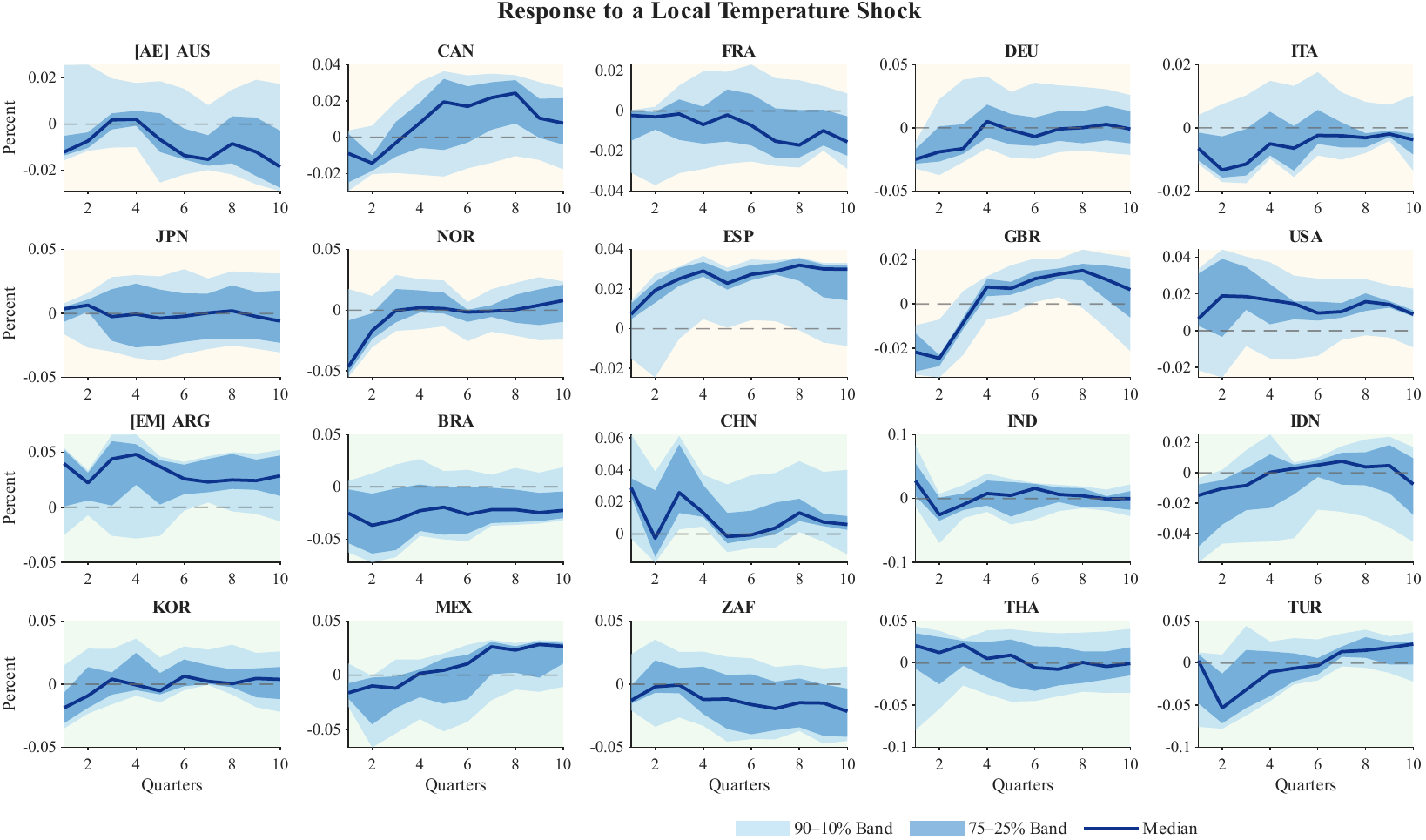}
    \caption*{\scriptsize \textit{Notes}: Each panel plots $\text{DIRF}^{(L,T)}_{i,h}(\tau)$ as defined in equation~\eqref{eq:DIRF} for $h = 1,\ldots,10$ quarters. The solid line is the median response ($\tau = 0.50$). Dark and light shaded bands correspond to the 25$^{\text{th}}$--75$^{\text{th}}$ and 10$^{\text{th}}$--90$^{\text{th}}$ quantile responses, respectively. Countries labelled \textup{[AE]} and \textup{[EM]} denote advanced and emerging market economies, respectively.}
    \caption{Country-level distributional impulse responses to a local temperature shock.}
    \label{fig:IRFlocaltemp}
\end{figure}

\begin{figure}[H]
    \centering
    \includegraphics[width=\textwidth]{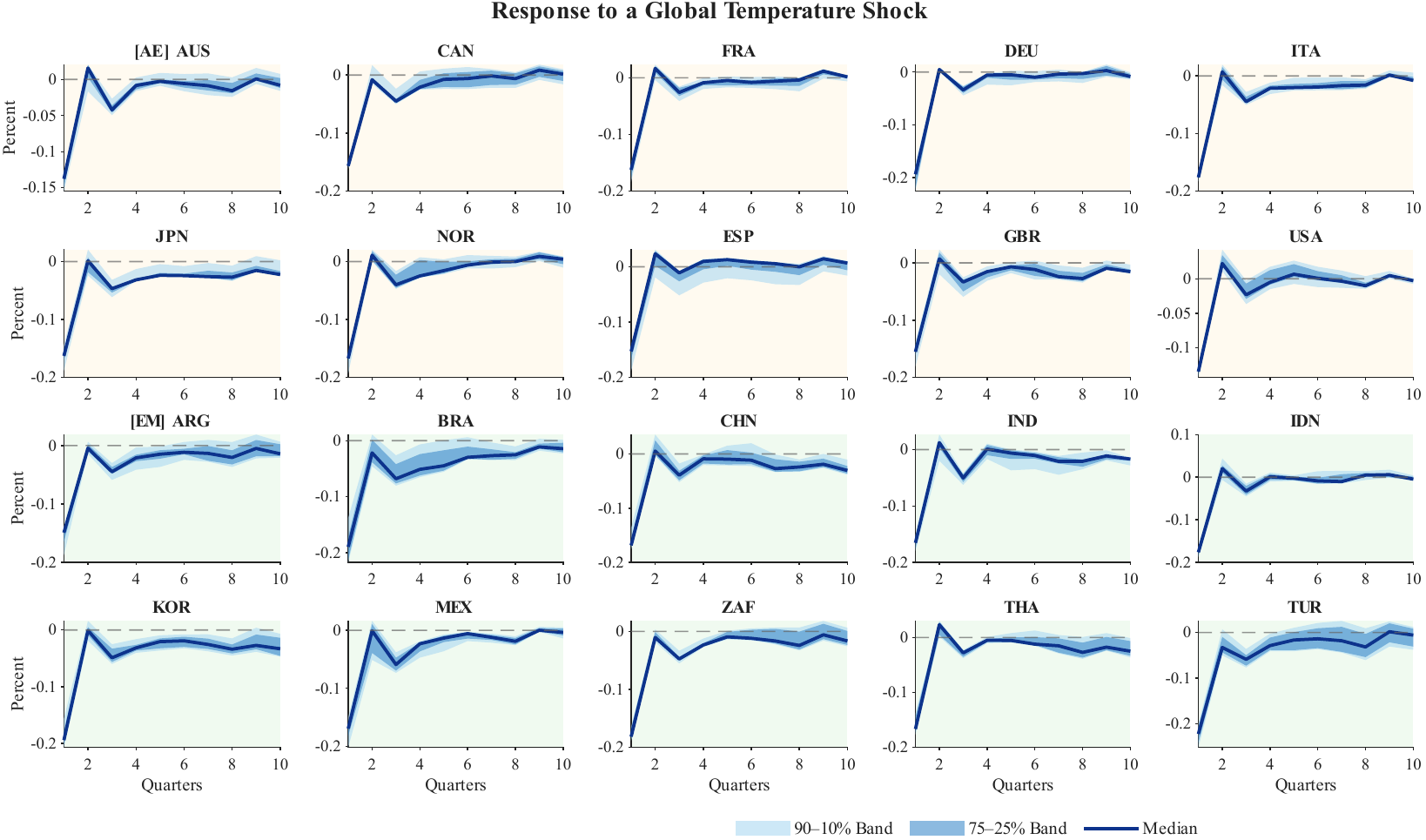}
    \caption*{\scriptsize \textit{Notes}: Each panel plots $\text{DIRF}^{(G,T)}_{i,h}(\tau)$ as defined in equation~\eqref{eq:DIRF} for $h = 1,\ldots,10$ quarters. See notes to Figure~\ref{fig:IRFlocaltemp} for further details.}
    \caption{Country-level distributional impulse responses to a global temperature shock.}
    \label{fig:IRFglobaltemp}
\end{figure}

\begin{figure}[H]
    \centering
    \includegraphics[width=\textwidth]{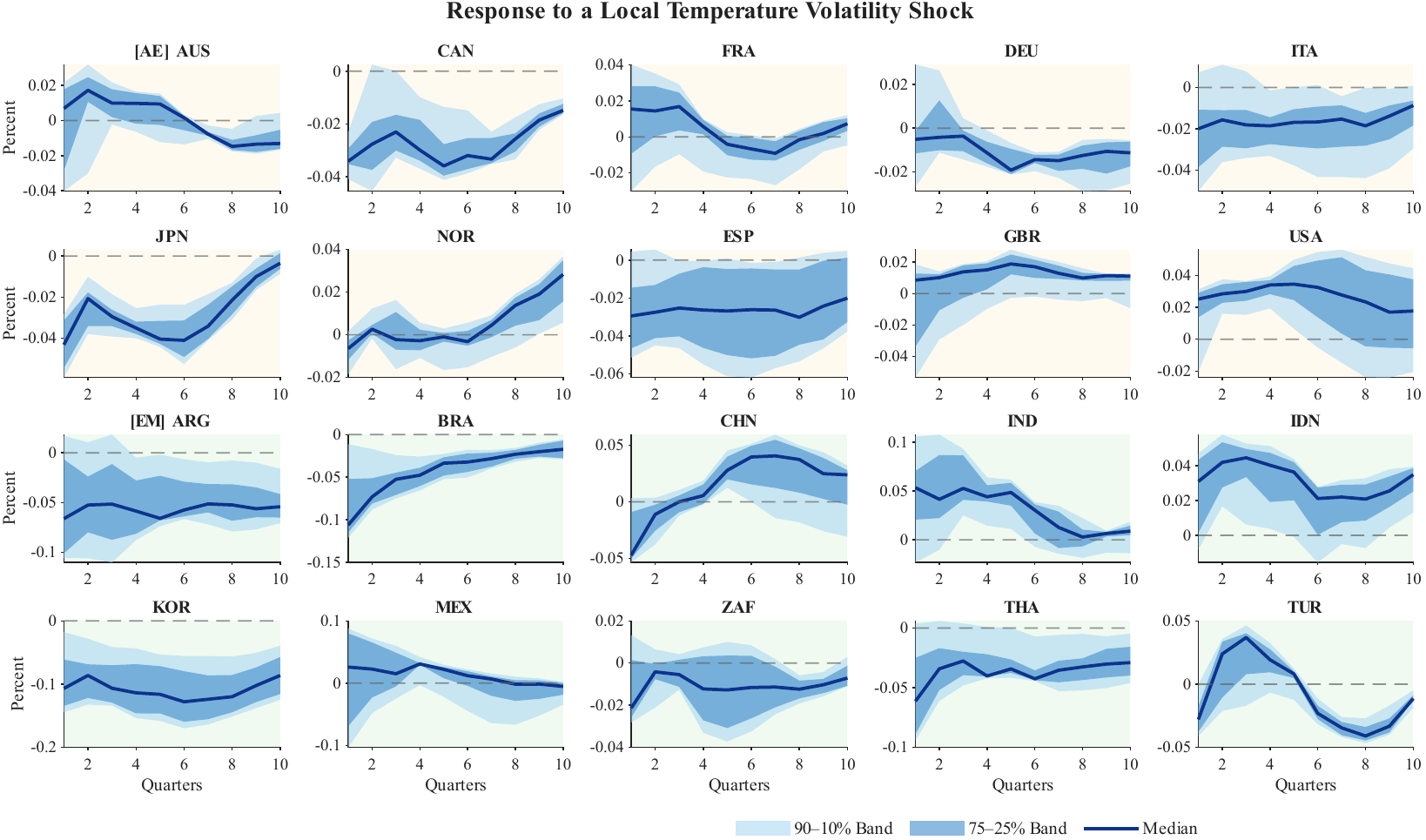}
    \caption*{\scriptsize \textit{Notes}: Each panel plots $\text{DIRF}^{(L,V)}_{i,h}(\tau)$ as defined in equation~\eqref{eq:DIRF} for $h = 1,\ldots,10$ quarters. See notes to Figure~\ref{fig:IRFlocaltemp} for further details.}
    \caption{Country-level distributional impulse responses to a local temperature volatility shock.}
    \label{fig:IRFlocalvol}
\end{figure}

\begin{figure}[H]
    \centering
    \includegraphics[width=\textwidth]{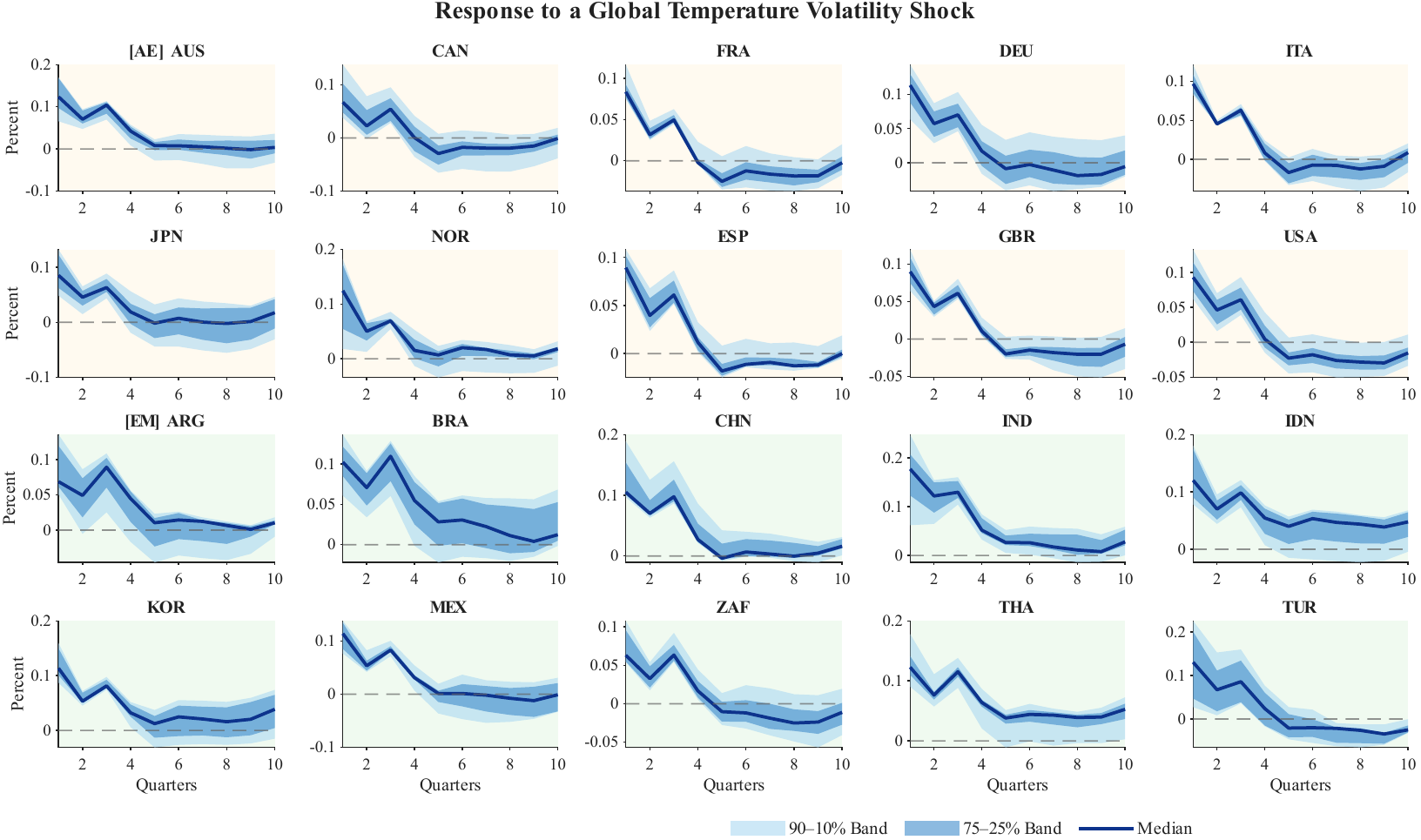}
    \caption*{\scriptsize \textit{Notes}: Each panel plots $\text{DIRF}^{(G,V)}_{i,h}(\tau)$ as defined in equation~\eqref{eq:DIRF} for $h = 1,\ldots,10$ quarters. See notes to Figure~\ref{fig:IRFlocaltemp} for further details.}
    \caption{Country-level distributional impulse responses to a global temperature volatility shock.}
    \label{fig:IRFglobalvol}
\end{figure}

\subsection{Climate Growth-at-Risk}
\label{subsec:climate_gar}

To translate the estimated quantile responses into an economically interpretable measure of downside risk, we compute climate growth-at-risk (GaR). For country $i$ and horizon $h$, define
\begin{equation}
    \text{GaR}_{i,h}^{s}(\tau) = Q_{\tau}\left(y_{i,t+h}
    \mid \mathcal{F}_t, s\right),
    \label{eq:GaR}
\end{equation}
where $s$ denotes a counterfactual climate scenario and $Q_{\tau}(\cdot \mid \mathcal{F}_t, s)$ is the $\tau^{\text{th}}$ conditional quantile of output growth under scenario $s$. The baseline scenario sets all climate regressors --- $\widehat{T}^{L}_{it}$, $\widehat{T}^{G}_{t}$, $\widehat{\sigma}^{L,\text{RV}}_{it}$, and $\widehat{\sigma}^{G,\text{RV}}_{t}$ --- to zero, corresponding to a counterfactual in which no climate disturbance occurs. The shock scenario increases a single climate regressor by one standard deviation, holding all others at zero. The climate impact on growth-at-risk is then measured by the change
\begin{equation}
    \Delta\text{GaR}_{i,h}^{s}(\tau) = \text{GaR}_{i,h}^{s}(\tau)
    - \text{GaR}_{i,h}^{0}(\tau),
    \label{eq:DeltaGaR}
\end{equation}
where $\text{GaR}_{i,h}^{0}(\tau)$ denotes the baseline. A negative value of $\Delta\text{GaR}_{i,h}^{s}(\tau)$ indicates that the climate shock shifts the $\tau^{\text{th}}$ quantile of the output growth distribution downward, representing a deterioration in downside risk. We focus primarily on the left tail of the conditional distribution, corresponding to either the 5$^{\text{th}}$ or 10$^{\text{th}}$ quantile, which are the standard thresholds in the growth-at-risk literature \citep{adrian2019vulnerable}.

To summarise the full left-tail impact of each climate scenario, we complement the quantile-specific GaR measure with the expected shortfall:
\begin{equation}
    \text{ES}_{i,h}^{s}(\alpha) = \frac{1}{\alpha}
    \int_0^{\alpha} Q_u\left(y_{i,t+h} \mid
    \mathcal{F}_t, s\right) du,
    \label{eq:ES}
\end{equation}
which averages the conditional quantile function over the entire lower tail $[0, \alpha]$ and thereby captures the severity of adverse outcomes beyond a given threshold. On the estimated quantile grid $\mathcal{T}$, equation~\eqref{eq:ES} is approximated by
\begin{equation}
    \widehat{\text{ES}}_{i,h}^{s}(\alpha) =
    \frac{1}{|\mathcal{T}_{\alpha}|}
    \sum_{\tau \in \mathcal{T}_{\alpha}}
    \widehat{Q}_{\tau}\left(y_{i,t+h} \mid
    \mathcal{F}_t, s\right),
    \qquad
    \mathcal{T}_{\alpha} = \left\{\tau \in \mathcal{T} :
    \tau \leq \alpha \right\},
    \label{eq:EShat}
\end{equation}
where $\mathcal{T}_{\alpha}$ collects all quantiles in the estimation grid at or below the threshold $\alpha$. The corresponding change in expected shortfall relative to the baseline is
\begin{equation}
    \Delta\widehat{\text{ES}}_{i,h}^{s}(\alpha) =
    \widehat{\text{ES}}_{i,h}^{s}(\alpha) -
    \widehat{\text{ES}}_{i,h}^{0}(\alpha),
    \label{eq:DeltaES}
\end{equation}
where a negative value indicates that the climate shock worsens the average outcome in the lower tail of the conditional distribution beyond what is captured by any single quantile. We evaluate equations~\eqref{eq:EShat} and~\eqref{eq:DeltaES} at $\alpha = 0.05$, so that $\mathcal{T}_{\alpha}$ includes all quantiles in $\mathcal{T}$ at or below the 5$^{\text{th}}$ percentile.

\begin{table}[h!]
\centering
\caption{Climate growth-at-risk: change in expected shortfall
$\Delta\widehat{\text{ES}}_{i,h}^{s}(0.05)$.}
\label{tab:ES}
\small
\setlength{\tabcolsep}{6pt}
\scalebox{0.9}{
\begin{tabular}{lccclccc}
\toprule
Country & $h=1$ & $h=4$ & $h=10$ & Country & $h=1$ & $h=4$ & $h=10$ \\
\midrule
\multicolumn{4}{l}{\textit{Panel A: Advanced Economies}} &
\multicolumn{4}{l}{\textit{Panel B: Emerging Market Economies}} \\[2pt]
Australia      & $-$0.43 & $-$0.04 & $-$0.07 & Argentina    & $-$0.54 & $-$0.23 & $+$0.09 \\
Austria        & $-$0.33 & $+$0.01 & $-$0.06 & Brazil       & $-$0.54 & $-$0.17 & $+$0.03 \\
Belgium        & $-$0.37 & $-$0.04 & $-$0.03 & Chile        & $-$0.60 & $-$0.39 & $-$0.26 \\
Canada         & $-$0.48 & $-$0.14 & $-$0.05 & China        & $-$0.36 & $+$0.05 & $-$0.04 \\
Finland        & $-$0.44 & $-$0.17 & $-$0.19 & India        & $-$0.36 & $-$0.07 & $-$0.03 \\
France         & $-$0.41 & $-$0.05 & $-$0.05 & Indonesia    & $-$0.21 & $-$0.09 & $+$0.10 \\
Germany        & $-$0.36 & $-$0.08 & $-$0.05 & Korea        & $-$0.60 & $-$0.20 & $-$0.11 \\
Italy          & $-$0.41 & $-$0.10 & $-$0.06 & Malaysia     & $-$0.48 & $-$0.03 & $+$0.08 \\
Japan          & $-$0.45 & $-$0.15 & $-$0.04 & Mexico       & $-$0.76 & $-$0.36 & $-$0.02 \\
Netherlands    & $-$0.40 & $-$0.02 & $+$0.01 & Peru         & $-$0.81 & $-$0.29 & $-$0.04 \\
New Zealand    & $-$0.38 & $-$0.11 & $-$0.08 & Philippines  & $-$0.52 & $-$0.31 & $-$0.23 \\
Norway         & $-$0.40 & $-$0.08 & $+$0.04 & Saudi Arabia & $-$0.40 & $+$0.06 & $-$0.12 \\
Spain          & $-$0.36 & $+$0.01 & $+$0.01 & Singapore    & $-$0.44 & $+$0.09 & $-$0.03 \\
Sweden         & $-$0.31 & $-$0.05 & $-$0.05 & South Africa & $-$0.48 & $-$0.10 & $-$0.12 \\
Switzerland    & $-$0.31 & $-$0.02 & $-$0.08 & Thailand     & $-$0.48 & $-$0.01 & $+$0.16 \\
United Kingdom & $-$0.48 & $-$0.05 & $-$0.10 & Turkey       & $-$0.90 & $-$0.20 & $-$0.17 \\
USA            & $-$0.39 & $-$0.08 & $-$0.06 &              &         &         &         \\
\midrule
\textbf{AE average} & \textbf{$-$0.395} & \textbf{$-$0.068} & \textbf{$-$0.054} &
\textbf{EM average} & \textbf{$-$0.530} & \textbf{$-$0.141} & \textbf{$-$0.044} \\
\midrule
\multicolumn{5}{l}{\textbf{Panel average (All Countries)}} &
\textbf{$-$0.460} & \textbf{$-$0.103} & \textbf{$-$0.049} \\
\bottomrule
\end{tabular}
}
\begin{minipage}{\linewidth}
\smallskip
\footnotesize
\textit{Notes}: Entries report $\Delta\widehat{\text{ES}}_{i,h}^{s}(0.05)$
as defined in equation~\eqref{eq:DeltaES}, evaluated at $\alpha = 0.05$
for horizons $h \in \{1, 4, 10\}$ quarters. A negative value indicates
that a one-standard-deviation climate shock shifts the lower tail of the
conditional output growth distribution downward relative to the baseline,
representing a deterioration in growth-at-risk. Positive values,
indicating an improvement relative to baseline, are prefixed with $+$.
Panel averages are computed as simple cross-country means within each
group and are displayed in bold.
\end{minipage}
\end{table}

Table~\ref{tab:ES} reports $\Delta\widehat{\text{ES}}_{i,h}^{s}(0.05)$ for horizons $h \in \{1, 4, 10\}$ quarters, where a negative entry indicates that a one-standard-deviation climate shock worsens the expected shortfall at the 5 per cent tail of the conditional output growth distribution. Figure~\ref{fig:ESmap} maps the short-horizon ($h=1$) column of the table across the panel. Three findings stand out. First, the impact of climate shocks on growth-at-risk is widespread and uniformly negative at the short horizon ($h = 1$): all 33 countries record a negative $\Delta\widehat{\text{ES}}$, with a panel average of $-$0.460 percentage points. Emerging market economies bear a larger short-run climate tail risk than advanced economies, with EM and AE averages of $-$0.530 and $-$0.395, respectively, consistent with their greater exposure to weather-sensitive sectors. The largest short-run deteriorations are recorded for Turkey ($-$0.90), Peru ($-$0.81), and Mexico ($-$0.76), while Sweden and Switzerland exhibit the smallest impacts among the advanced economies, both at $-$0.31. The map, depicted in Figure \ref{fig:ESmap}, emphasizes this pattern.  The most pronounced deteriorations are concentrated in Latin America, Turkey, and emerging Asia, while continental Europe and the smaller advanced economies form a comparatively lightly shaded band.

\begin{figure}[htbp]
    \centering
    \includegraphics[width=\textwidth]{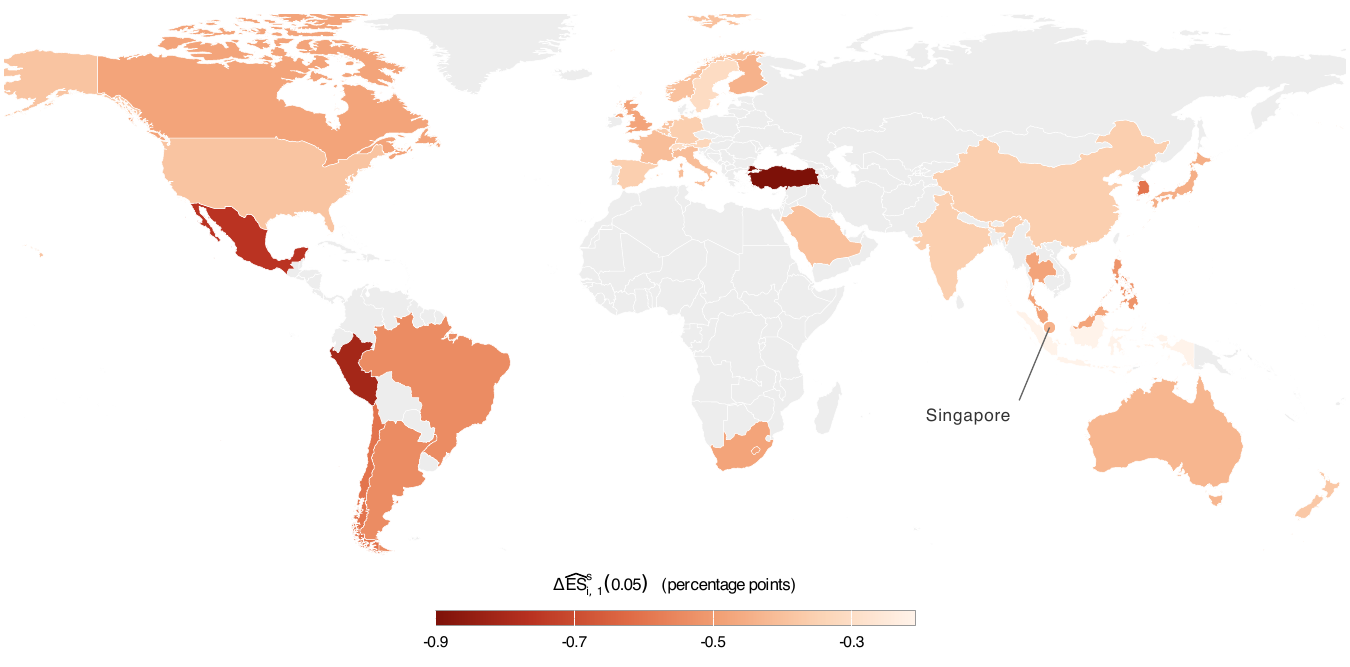}
    \caption*{\scriptsize \textit{Notes}: Each sample country is shaded by its change in expected shortfall $\Delta\widehat{\text{ES}}_{i,h}^{s}(0.05)$ at horizon $h=1$, as defined in equation~\eqref{eq:DeltaES} and reported in the $h=1$ column of Table~\ref{tab:ES}. Darker shading indicates a larger deterioration in the 5 per cent expected shortfall of the conditional output growth distribution following a one-standard-deviation climate shock; all 33 countries record a negative value at this horizon. Countries shaded in light grey are not part of the estimation sample. Singapore is marked with a separate symbol because it is not visible at this scale. Cross-country averages are $-$0.395 for advanced economies, $-$0.530 for emerging market economies, and $-$0.460 for the full panel.}
    \caption{Geography of climate growth-at-risk: change in expected shortfall at $h=1$.}
    \label{fig:ESmap}
\end{figure}

Second, climate tail risks dissipate substantially at medium and longer horizons. The panel-average $\Delta\widehat{\text{ES}}$ falls sharply from $-$0.460 at $h = 1$ to $-$0.103 at $h = 4$ and $-$0.049 at $h = 10$, suggesting that while climate shocks generate significant short-run downside risks, economies partially recover their growth trajectory over time. Third, a small number of economies record positive $\Delta\widehat{\text{ES}}$ at medium and longer horizons --- notably Austria, Spain, China, Saudi Arabia, and Singapore at $h = 4$, and Netherlands, Norway, Spain, Argentina, Brazil, Indonesia, Malaysia, and Thailand at $h = 10$ --- indicating that climate shocks may improve expected shortfall for these economies at longer horizons, potentially reflecting structural features such as sectoral reallocation or terms-of-trade benefits. By contrast, Chile ($-$0.60, $-$0.39, $-$0.26) and the Philippines ($-$0.52, $-$0.31, $-$0.23) are the only economies that sustain large and persistent deteriorations across all three horizons, underscoring their exceptional vulnerability to climate tail risk throughout the forecast horizon.

\subsection{Model Comparison and Out-of-Sample Tail Forecast Performance}
\label{subsec:climate_model_comparison}

In this final section, we evaluate whether the panel quantile model in equation~\eqref{eq:qLP} delivers superior tail risk forecast accuracy relative to a less parameterised alternative. To this end, we conduct a pseudo-out-of-sample forecasting exercise under an expanding window scheme. The initial estimation window covers 1979Q3--1999Q4, and forecasts are evaluated over the period 2000Q1--2023Q3, yielding $T_{\text{oos}} = 95$ out-of-sample quarters and a total evaluation set of $|\mathcal{O}| = N \times T_{\text{oos}}$ country-period pairs. We focus on the one-step-ahead horizon ($h = 1$), at which forecast score differences between competing models are most precisely estimated. At each evaluation period $t$, the model is re-estimated on the expanding sample, and a quantile forecast $\widehat{Q}_{\tau,i,t+1}$ is produced for each country $i$ and each $\tau \in \mathcal{T}$.

We compare our proposed model against a country-specific quantile regression (CSQR) benchmark. The CSQR estimates equation~\eqref{eq:qLP} separately for each country $i$, retaining all four climate regressors $\widehat{T}^{L}_{it}$, $\widehat{T}^{G}_{t}$, $\widehat{\sigma}^{L,\text{RV}}_{it}$, and $\widehat{\sigma}^{G,\text{RV}}_{t}$, but omitting the common time effect $f_{t,h}$. The CSQR is therefore nested within our proposed model. Because both specifications use the same four climate regressors, the comparison isolates the gains from the panel structure itself: the common time effect $f_{t,h}$ and the hierarchical Gaussian-process prior that pools coefficient paths across countries and quantiles. It does not speak to the contribution of the climate regressors, which are common to both models.

Forecast accuracy is evaluated via the quantile scores of \citet{gneiting2011comparing}. For each model, quantile $\tau$, and evaluation set $\mathcal{O}$,
\begin{equation}
    \text{QS}_{\tau} = \frac{1}{|\mathcal{O}|}
    \sum_{(i,t) \in \mathcal{O}}
    \rho_{\tau}\left(y_{i,t+1} -
    \widehat{Q}_{\tau,i,t+1}\right),
    \label{eq:QS}
\end{equation}
where the tick loss function is
\begin{equation}
    \rho_{\tau}(u) = u\left(\tau - \mathbf{1}\{u < 0\}\right),
    \label{eq:tick}
\end{equation}
and $\mathcal{O}$ denotes the out-of-sample evaluation set. We normalise by $|\mathcal{O}|$ to obtain a mean score comparable across models, with a lower value of $\text{QS}_{\tau}$ indicating better forecast accuracy at quantile $\tau$. As our primary interest lies in the left tail of the conditional distribution of output growth, we focus on the lower-tail quantile set
\begin{equation}
    \mathcal{T}_{L} = \{0.01, 0.02, \ldots, 0.10\}
    \subset \mathcal{T},
    \label{eq:tailset}
\end{equation}

and report the mean tail quantile score
\begin{equation}
    \text{QS}_{\text{tail}} = \frac{1}{|\mathcal{T}_{L}|}
    \sum_{\tau \in \mathcal{T}_{L}} \text{QS}_{\tau},
    \label{eq:QStail}
\end{equation}
which aggregates forecast accuracy across the lowest decile of the conditional distribution.

Table~\ref{tab:forecast} reports out-of-sample tail quantile scores expressed relative to the CSQR benchmark, where a value below unity indicates superior growth-at-risk forecast accuracy from our proposed model. The results provide strong and consistent evidence that the proposed panel quantile regression produces more accurate growth-at-risk forecasts than the country-specific quantile regression: the panel-average $\overline{\text{QS}}_{\text{tail}}$ of 0.663 indicates a reduction in tail risk forecast loss of approximately 34 per cent relative to the country-specific benchmark, with every country in the sample recording a $\overline{\text{QS}}_{\text{tail}}$ below unity. Since both models condition on the same four climate regressors, these gains are attributable to the panel structure alone: the common time effect $f_{t,h}$ and the hierarchical prior allow each country to borrow information from the rest of the panel, so that climate-driven tail risks estimated imprecisely in isolation are recovered more accurately under partial pooling.

Forecast improvements are largest at the extreme left tail ($\tau = 0.01$, panel average 0.490), consistent with the view that extreme growth-at-risk events have a strong global climate component that country-specific models cannot adequately capture; at this quantile the largest gains accrue to Spain (0.084), the United Kingdom (0.095), France (0.097), and Mexico (0.100). Averaged across the full tail, the strongest performers are Sweden (0.415), the United Kingdom (0.436), Brazil (0.456), and Mexico (0.464). A small number of country-quantile combinations record scores above unity. At the extreme left tail ($\tau = 0.01$), Germany (1.022) and Thailand (1.246) show modest underperformance, while Australia (2.046) is a notable outlier, more than doubling the benchmark loss at this quantile. At $\tau = 0.05$ and $\tau = 0.10$, a handful of countries --- Canada, Spain, France, and Saudi Arabia --- record scores only marginally above unity. In all cases, however, the $\overline{\text{QS}}_{\text{tail}}$ averaged across the full tail remains below unity, confirming that the panel quantile regression outperforms the country-specific benchmark for every economy in the sample.

\begin{table}[H]
\centering
\caption{Out-of-sample relative tail quantile scores, 2000Q1--2023Q3.}
\label{tab:forecast}
\footnotesize
\setlength{\tabcolsep}{4pt}
\renewcommand{\arraystretch}{0.92}
\resizebox{\textwidth}{!}{%
\begin{tabular}{lcccc @{\hspace{18pt}} lcccc}
\toprule
\multicolumn{5}{c}{\textit{Panel A: Advanced Economies}} &
\multicolumn{5}{c}{\textit{Panel B: Emerging Market Economies}} \\
\cmidrule(lr){1-5}\cmidrule(lr){6-10}
Country & $\tau{=}0.01$ & $\tau{=}0.05$ & $\tau{=}0.10$ & $\overline{\mathrm{QS}}_{\mathrm{tail}}$ &
Country & $\tau{=}0.01$ & $\tau{=}0.05$ & $\tau{=}0.10$ & $\overline{\mathrm{QS}}_{\mathrm{tail}}$ \\
\midrule
Australia      & 2.046 & 0.838 & 0.900 & 0.951 & Argentina    & 0.538 & 0.701 & 0.965 & 0.768 \\
Austria        & 0.671 & 0.571 & 0.544 & 0.633 & Brazil       & 0.241 & 0.438 & 0.642 & 0.456 \\
Belgium        & 0.213 & 0.466 & 0.589 & 0.519 & Chile        & 0.380 & 0.697 & 0.750 & 0.679 \\
Canada         & 0.175 & 1.325 & 1.071 & 0.982 & China        & 0.829 & 0.857 & 0.917 & 0.888 \\
Finland        & 0.454 & 0.569 & 0.592 & 0.546 & India        & 0.780 & 0.597 & 0.672 & 0.641 \\
France         & 0.097 & 0.788 & 1.112 & 0.674 & Indonesia    & 0.277 & 0.770 & 0.799 & 0.716 \\
Germany        & 1.022 & 0.639 & 0.681 & 0.781 & Korea        & 0.322 & 0.557 & 0.642 & 0.550 \\
Italy          & 0.167 & 0.765 & 0.651 & 0.603 & Malaysia     & 0.257 & 0.569 & 0.615 & 0.522 \\
Japan          & 0.459 & 0.698 & 0.649 & 0.637 & Mexico       & 0.100 & 0.585 & 0.496 & 0.464 \\
Netherlands    & 0.324 & 0.417 & 0.613 & 0.495 & Peru         & 0.849 & 0.565 & 0.574 & 0.728 \\
New Zealand    & 0.273 & 0.562 & 0.715 & 0.523 & Philippines  & 0.354 & 0.864 & 0.830 & 0.729 \\
Norway         & 0.266 & 0.541 & 0.662 & 0.522 & Saudi Arabia & 0.396 & 0.932 & 1.044 & 0.855 \\
Spain          & 0.084 & 1.129 & 1.059 & 0.905 & Singapore    & 0.861 & 0.640 & 0.653 & 0.665 \\
Sweden         & 0.255 & 0.391 & 0.505 & 0.415 & South Africa & 0.511 & 0.416 & 0.716 & 0.501 \\
Switzerland    & 0.571 & 0.609 & 0.813 & 0.667 & Thailand     & 1.246 & 0.915 & 0.974 & 0.993 \\
United Kingdom & 0.095 & 0.510 & 0.543 & 0.436 & Turkey       & 0.729 & 0.671 & 0.660 & 0.682 \\
USA            & 0.323 & 0.909 & 0.822 & 0.763 &              &       &       &       &       \\
\midrule
\multicolumn{10}{l}{\textbf{Panel average: $\tau{=}0.01$: 0.490 \quad $\tau{=}0.05$: 0.682 \quad $\tau{=}0.10$: 0.742 \quad $\overline{\mathrm{QS}}_{\mathrm{tail}}$: 0.663}} \\
\bottomrule
\end{tabular}%
}
\begin{minipage}{\linewidth}
\smallskip
\footnotesize
\textit{Notes}: Entries report the ratio
$\mathrm{QS}_{\tau}^{\text{panel}} / \mathrm{QS}_{\tau}^{\text{CSQR}}$ of the panel
model's quantile score to that of the country-specific quantile regression (CSQR)
benchmark, for selected
quantiles $\tau \in \{0.01, 0.05, 0.10\}$ and for the mean tail score
$\overline{\mathrm{QS}}_{\mathrm{tail}}$ averaged over
$\mathcal{T}_{L} = \{0.01, 0.02, \ldots, 0.10\}$, with $\mathrm{QS}_{\tau}$
and $\overline{\mathrm{QS}}_{\mathrm{tail}}$
defined in equations~\eqref{eq:QS} and~\eqref{eq:QStail}. A value
below unity indicates superior tail forecast accuracy from the panel
quantile model. The evaluation period is 2000Q1--2023Q3 under an
expanding window scheme with initial training sample 1979Q3--1999Q4.
\end{minipage}
\end{table}

\section{Robustness}
\label{subsec:climate_robustness}

In this section, we undertake three robustness checks to assess the sensitivity of our empirical results to key modeling choices, focusing on the role of the common factor $f_{t,h}$, the local/global decomposition of climate shocks, and the stability of the AE/EM heterogeneity documented above. The first check re-estimates the panel quantile model with the common factor $f_{t,h}$ removed entirely, isolating how much of the baseline climate responses this component accounts for. Figure~\ref{fig:AverageIRFsrobust1} plots the resulting panel-average DIRFs. The two local shocks are essentially unaffected, both in timing and magnitude, relative to the baseline in Figure~\ref{fig:AverageIRFs}. The two global shocks retain the same sign and horizon profile as in the baseline but are noticeably smaller in magnitude once the common factor is removed. This is consistent with the identification argument in Section~\ref{subsec:climate_specification}: since the global climate regressors are collinear with any unrestricted common time variation, $f_{t,h}$ is what allows the model to recover the full magnitude of the average global climate effect, separately from the country-level heterogeneity. The attenuation of the global responses in this restricted specification therefore corroborates, rather than undermines, the role of the common factor in the baseline model.

\begin{figure}[htbp]
    \centering
    \includegraphics[width=\textwidth]{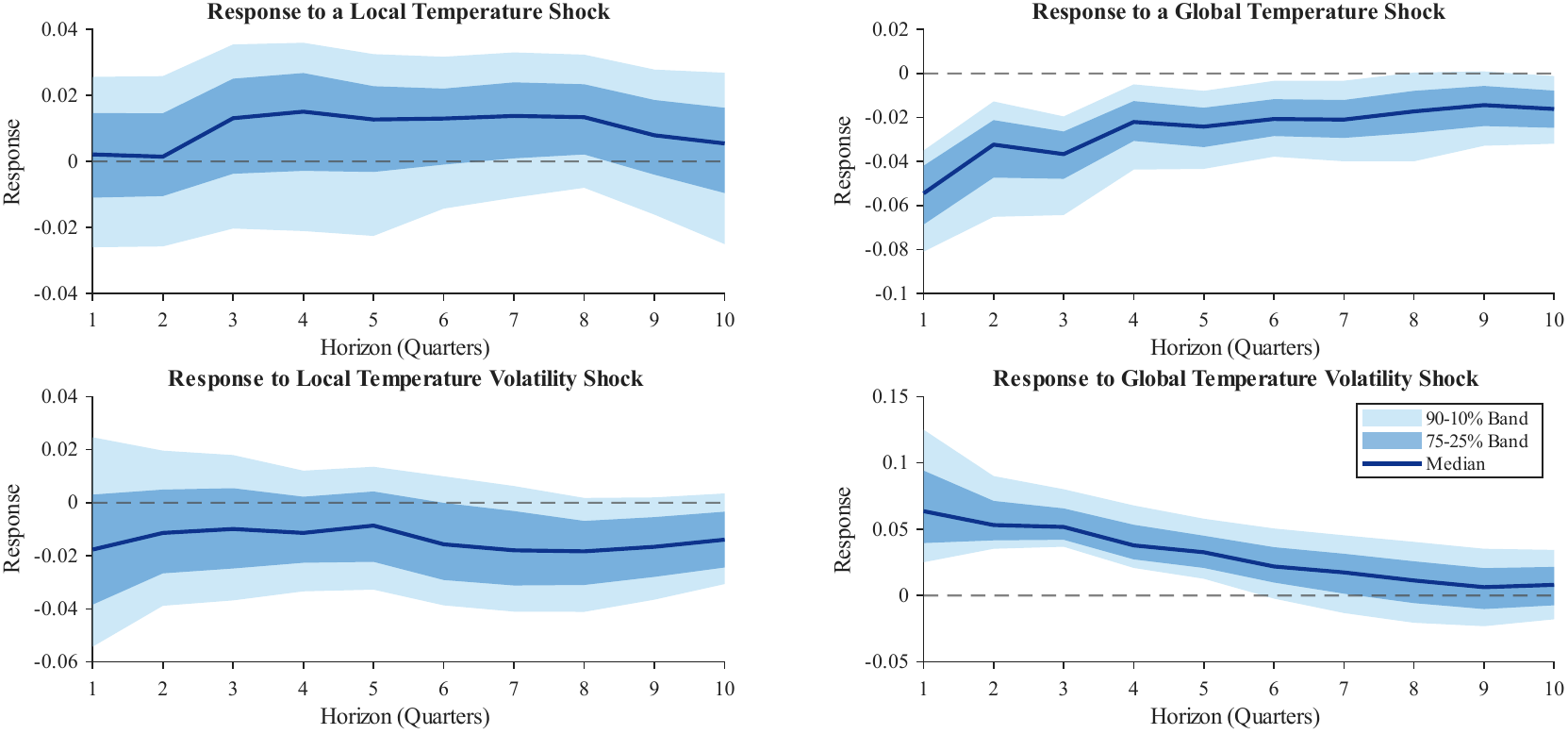}
    \caption*{\scriptsize \textit{Notes}: Each panel plots $\overline{\text{DIRF}}^{\,s}_{h}(\tau)$ as defined in equation~\eqref{eq:avgDIRF} for $h = 1, \ldots, 10$ quarters, estimated from the panel quantile model in equation~\eqref{eq:qLP} with the common factor $f_{t,h}$ removed. The solid line is the median response ($\tau = 0.50$). Dark and light shaded bands correspond to the 25$^{\text{th}}$--75$^{\text{th}}$ and 10$^{\text{th}}$--90$^{\text{th}}$ quantile responses, respectively. Panels report responses to: (a) local temperature shock; (b) global temperature shock; (c) local temperature volatility shock; (d) global temperature volatility shock.}
    \caption{Panel-average distributional impulse responses to local and global climate shocks with the common factor removed ($f_{t,h} = 0$).}
    \label{fig:AverageIRFsrobust1}
\end{figure}

We further evaluate this restricted specification along two additional dimensions: the stability of its expected shortfall estimates and its out-of-sample forecasting performance. In terms of expected shortfall, estimates from the restricted model are considerably less stable across horizons than the baseline: the panel-average deterioration roughly doubles at $h=1$ and reverses sign at $h=4$ (Table~\ref{tab:ES_robust1} in the appendix). This instability provides further evidence that the common factor contributes to producing well-behaved tail-risk estimates.

We next compare the out-of-sample forecasting tail risk performance of our baseline model to that of the restricted specification, following the procedure described in Section~\ref{subsec:climate_model_comparison}. Table~\ref{tab:forecast_robust1} in the appendix reports the baseline's quantile scores relative to the restricted model, where a value below unity indicates that the baseline yields more accurate tail risk forecasts. The baseline outperforms the restricted specification for every country in the sample except Saudi Arabia and Thailand, where the two models perform comparably. The panel-average relative score is 0.70, indicating a reduction in tail risk forecast loss of approximately 30 per cent when the common factor is included, with the largest gains concentrated at the extreme left tail ($\tau=0.01$, panel average 0.55). Taken together, these results indicate that the common factor $f_{t,h}$ materially improves both the stability and the out-of-sample accuracy of the model's climate-driven tail-risk estimates, rather than serving as a purely specification-driven addition.

Next, for the second robustness check, we estimate a restricted version of the model in which the global climate regressors are excluded ($\widehat{T}^{G}_{t} = \widehat{\sigma}^{G,\text{RV}}_{t} = 0$), retaining only the local temperature and local volatility shocks. Figure~\ref{fig:AverageIRFsrobust2} reports the resulting panel-average DIRFs. Both the local temperature and local temperature volatility responses are close to the baseline in Figure~\ref{fig:AverageIRFs}, in both timing and magnitude. This indicates that the local shock responses are not an artefact of jointly estimating the global climate regressors, and that the local and global components of the model are identified largely independently of one another.

\begin{figure}[htbp]
    \centering
    \includegraphics[width=\textwidth]{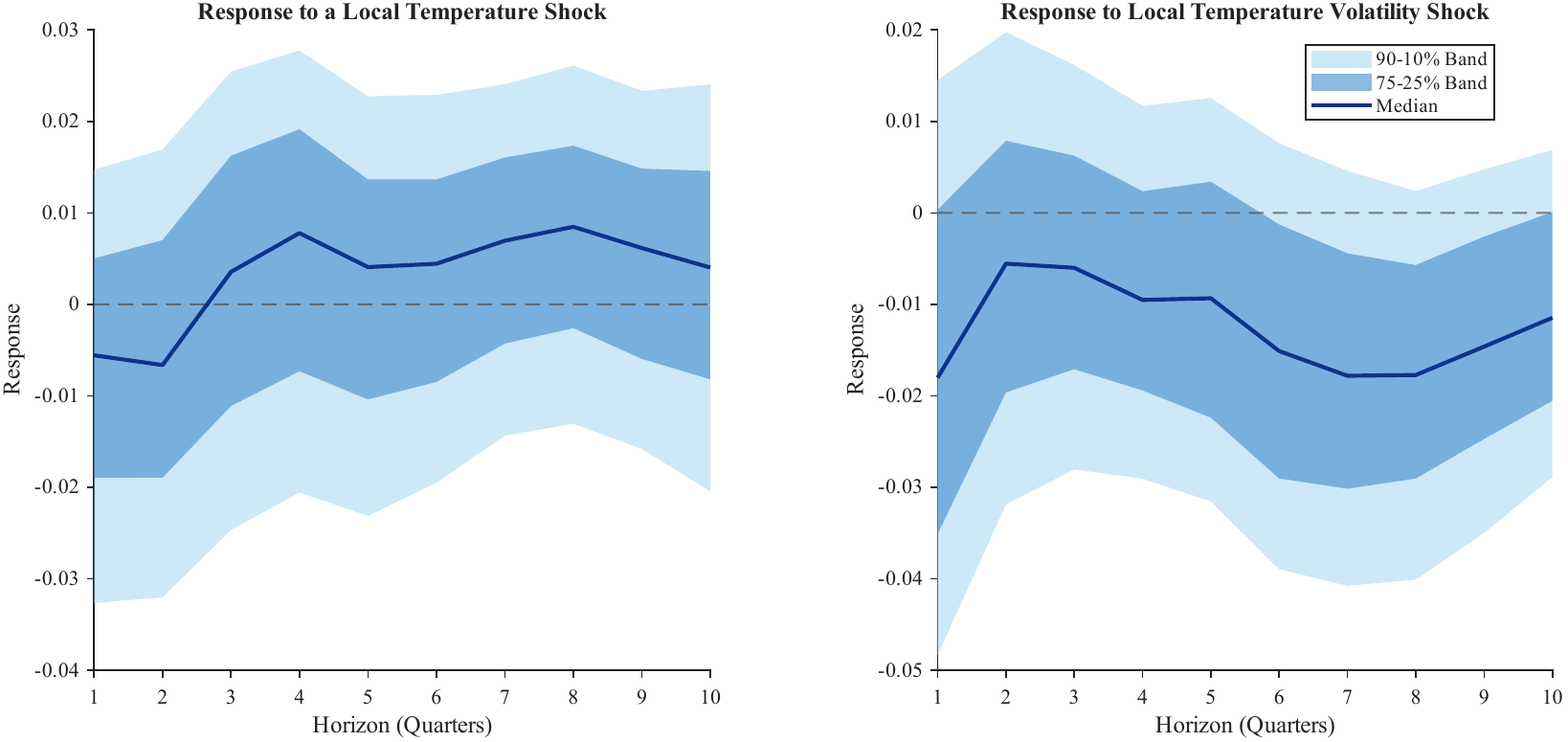}
    \caption*{\scriptsize \textit{Notes}: Each panel plots $\overline{\text{DIRF}}^{\,s}_{h}(\tau)$ as defined in equation~\eqref{eq:avgDIRF} for $h = 1, \ldots, 10$ quarters, estimated from the panel quantile model in equation~\eqref{eq:qLP} with the global temperature and global volatility regressors removed. The solid line is the median response ($\tau = 0.50$). Dark and light shaded bands correspond to the 25$^{\text{th}}$--75$^{\text{th}}$ and 10$^{\text{th}}$--90$^{\text{th}}$ quantile responses, respectively. Compare to Figure~\ref{fig:AverageIRFs} for the baseline specification.}
    \caption{Panel-average distributional impulse responses to local climate shocks with the global climate regressors removed ($\widehat{T}^{G}_{t} = \widehat{\sigma}^{G,\text{RV}}_{t} = 0$).}
    \label{fig:AverageIRFsrobust2}
\end{figure}

\begin{figure}[H]
    \centering
    \includegraphics[width=\textwidth]{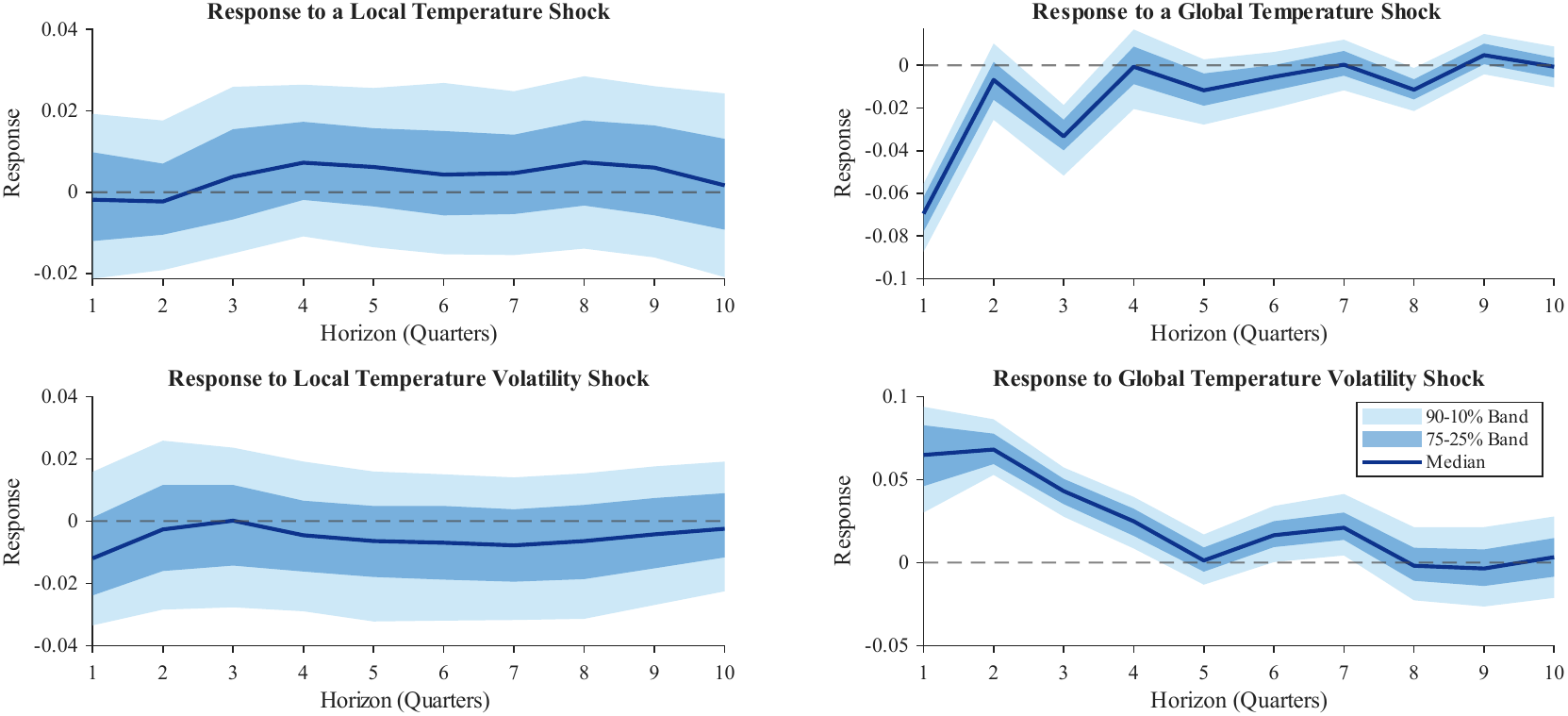}
    \caption*{\scriptsize \textit{Notes}: Each panel plots $\overline{\text{DIRF}}^{\,s}_{h}(\tau)$ as defined in equation~\eqref{eq:avgDIRF} for $h = 1, \ldots, 10$ quarters, estimated from the panel quantile model in equation~\eqref{eq:qLP} on the advanced economy subsample only. The solid line is the median response ($\tau = 0.50$). Dark and light shaded bands correspond to the 25$^{\text{th}}$--75$^{\text{th}}$ and 10$^{\text{th}}$--90$^{\text{th}}$ quantile responses, respectively. Panels report responses to: (a) local temperature shock; (b) global temperature shock; (c) local temperature volatility shock; (d) global temperature volatility shock.}
    \caption{Panel-average distributional impulse responses to local and global climate shocks, advanced economies only.}
    \label{fig:AverageIRFsrobust3}
\end{figure}

Finally, in the last robustness check, we estimate the panel quantile model separately on the advanced- and emerging-economy subsamples. This tests whether the AE/EM heterogeneity documented in Section~\ref{subsec:heterogeneity} persists when each group is estimated entirely independently, with its own dynamic common factor $f_{t,h}$ and its own local and global climate regressors, rather than emerging only under partial pooling across a single global hierarchical prior. Figures~\ref{fig:AverageIRFsrobust3} and~\ref{fig:AverageIRFsrobust4} report the resulting panel-average DIRFs for the AE and EM subsamples, respectively. The overall dynamics of the responses are broadly similar to the pooled baseline in Figure~\ref{fig:AverageIRFs}, both in sign and horizon profile, across all four climate shocks. This indicates that our baseline results are not driven by pooling advanced and emerging economies under a single hierarchical prior.

\begin{figure}[H]
    \centering
    \includegraphics[width=\textwidth]{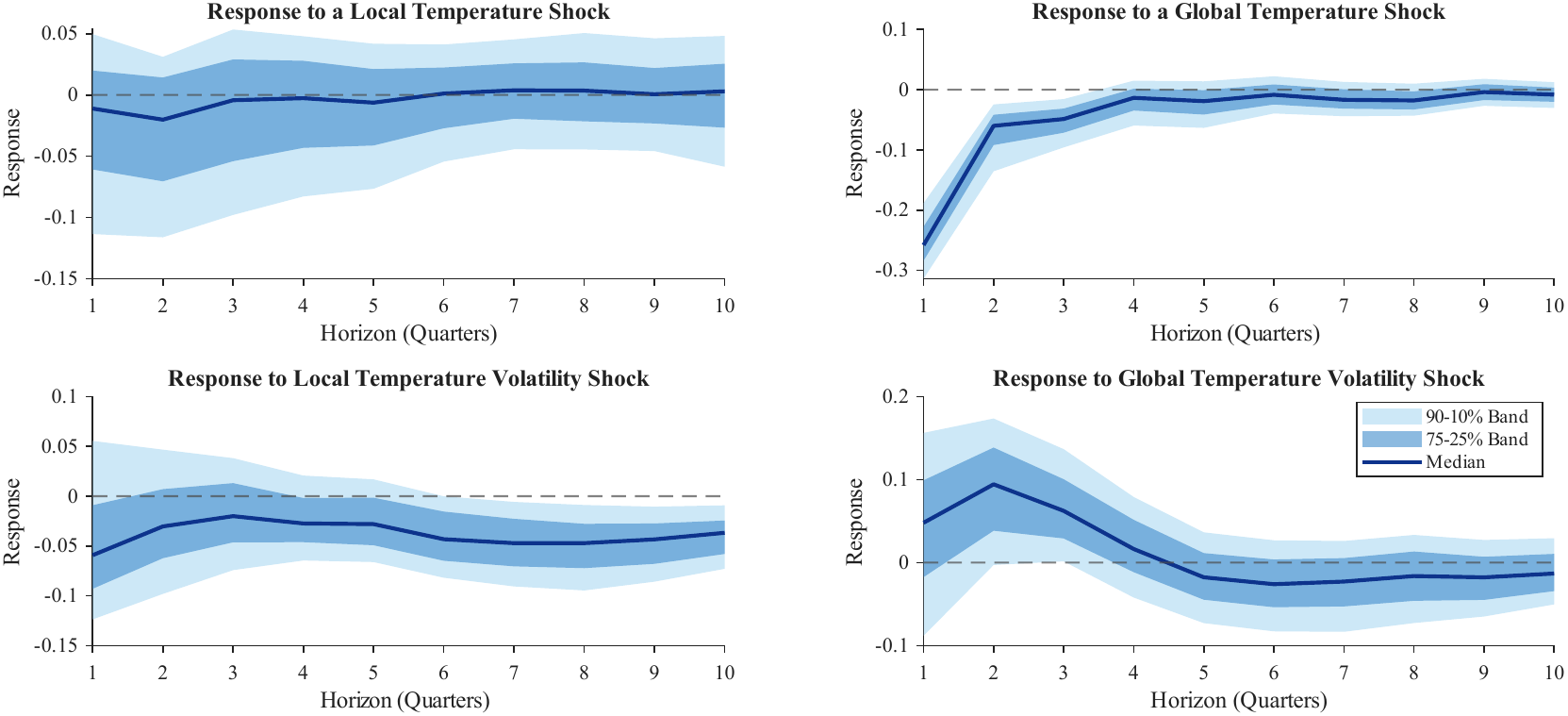}
    \caption*{\scriptsize \textit{Notes}: Each panel plots $\overline{\text{DIRF}}^{\,s}_{h}(\tau)$ as defined in equation~\eqref{eq:avgDIRF} for $h = 1, \ldots, 10$ quarters, estimated from the panel quantile model in equation~\eqref{eq:qLP} on the emerging market economy subsample only. The solid line is the median response ($\tau = 0.50$). Dark and light shaded bands correspond to the 25$^{\text{th}}$--75$^{\text{th}}$ and 10$^{\text{th}}$--90$^{\text{th}}$ quantile responses, respectively. Panels report responses to: (a) local temperature shock; (b) global temperature shock; (c) local temperature volatility shock; (d) global temperature volatility shock.}
    \caption{Panel-average distributional impulse responses to local and global climate shocks, emerging economies only.}
    \label{fig:AverageIRFsrobust4}
\end{figure}

\section{Conclusion}
\label{sec:conclusion}
This paper developed a hierarchical Bayesian panel quantile regression framework, combining Gaussian-process-smoothed coefficient paths with Bernstein-polynomial monotonicity constraints to address quantile crossing, and applied it to characterize the distributional macroeconomic effects of climate shocks across a 33-country quarterly panel spanning 1979--2023. We find that global temperature shocks generate a systemic downside risk to output growth shared broadly across advanced and emerging economies, while global temperature volatility shocks generate a short-lived positive response instead. Both stand in contrast to local, idiosyncratic shocks, whose effects are modest on average but substantially heterogeneous across countries.

Combining the effects of all four climate shocks, our climate growth-at-risk measure shows that downside risk is pervasive and uniformly negative in the immediate aftermath of a shock, disproportionately affecting emerging market economies, before dissipating substantially at longer horizons. In an out-of-sample forecasting exercise, our proposed panel quantile model reduces tail-risk forecast loss by approximately 34 per cent relative to a country-specific benchmark, underscoring the practical value of pooling information across countries when forecasting climate-driven downside risk.

\newpage
\bibliography{ref}

\newpage
\appendix

\section{Proofs of Theoretical Results}
We first present the proof of Lemma~\ref{lem:noncrossing_sufficient}.
\begin{proof}
Fix $0<\tau<\tau'<1$. The difference between the two conditional
quantiles is
\[
\begin{aligned}
Q_{\tau'}(y_{it}\mid \mathcal I_{it})
-
Q_{\tau}(y_{it}\mid \mathcal I_{it})
&=
\left[
a_i(\tau')+f_t+\bm{z}_{it}'\bm{\beta}_i(\tau')
\right]
-
\left[
a_i(\tau)+f_t+\bm{z}_{it}'\bm{\beta}_i(\tau)
\right]  \\
&=
a_i(\tau')-a_i(\tau)
+
\bm{z}_{it}'\left\{\bm{\beta}_i(\tau')-\bm{\beta}_i(\tau)\right\}.
\end{aligned}
\]
The common time effect cancels because it is invariant across quantiles.
By assumption,
\[
a_i(\tau')-a_i(\tau)\ge 0
\]
and, for every $k=1,\ldots,K$,
\[
\beta_{ik}(\tau')-\beta_{ik}(\tau)\ge 0.
\]
Since $\bm{z}_{it}\in[0,1]^K$, each component satisfies $z_{itk}\ge 0$.
Therefore,
\[
\bm{z}_{it}'\left\{\bm{\beta}_i(\tau')-\bm{\beta}_i(\tau)\right\}
=
\sum_{k=1}^K
z_{itk}
\left\{
\beta_{ik}(\tau')-\beta_{ik}(\tau)
\right\}
\ge 0.
\]
Thus,
\[
Q_{\tau'}(y_{it}\mid \mathcal I_{it})
-
Q_{\tau}(y_{it}\mid \mathcal I_{it})
\ge 0,
\]
which proves the result.
\end{proof}
We shall now present the proof of Theorem~\ref{thm:soft_noncrossing}.
\begin{proof}
Fix $0<\tau<\tau'<1$. For any $\bm{z}\in\mathcal Z$, the difference between the
two fitted conditional quantiles is
\[
\begin{aligned}
Q_{\tau'}(y_{it}\mid \mathcal I_{it})
-
Q_{\tau}(y_{it}\mid \mathcal I_{it})
&=
\bm{r}(\bm{z})'
\left\{
\bm{\theta}_i(\tau')-\bm{\theta}_i(\tau)
\right\}  \\
&=
\bm{r}(\bm{z})'
\left\{
\bm{\mu}(\tau')-\bm{\mu}(\tau)
\right\}
+
\bm{r}(\bm{z})'\bm{d}_i(\tau,\tau'),
\end{aligned}
\]
where
\[
\bm{d}_i(\tau,\tau')=\bm{u}_i(\tau')-\bm{u}_i(\tau).
\]
By definition of the population margin,
\[
\bm{r}(\bm{z})'
\left\{
\bm{\mu}(\tau')-\bm{\mu}(\tau)
\right\}
\ge
m_{\tau,\tau'}
\qquad
\text{for all } \bm{z}\in\mathcal Z.
\]
Therefore, crossing can occur only if there exists some $\bm{z}\in\mathcal Z$
such that
\[
\bm{r}(\bm{z})'\bm{d}_i(\tau,\tau')<-m_{\tau,\tau'}.
\]

Let $d_{ij}(\tau,\tau')$ denote the $j$th component of
$\bm{d}_i(\tau,\tau')$, for $j=1,\ldots,p$. Since
$\bm{r}(\bm{z})=(1,\bm{z}')'$ and $\bm{z}\in[0,1]^K$, all components of $\bm{r}(\bm{z})$ are nonnegative
and
\[
\sum_{j=1}^p r_j(\bm{z})\le p.
\]
If
\[
d_{ij}(\tau,\tau')\ge -\frac{m_{\tau,\tau'}}{p}
\qquad
\text{for all } j=1,\ldots,p,
\]
then
\[
\bm{r}(\bm{z})'\bm{d}_i(\tau,\tau')
=
\sum_{j=1}^p r_j(\bm{z})d_{ij}(\tau,\tau')
\ge
-\frac{m_{\tau,\tau'}}{p}
\sum_{j=1}^p r_j(\bm{z})
\ge
-m_{\tau,\tau'}.
\]
Hence crossing implies that at least one component satisfies
\[
d_{ij}(\tau,\tau')
<
-\frac{m_{\tau,\tau'}}{p}.
\]
By the union bound,
\[
\Pr(\text{crossing})
\le
\sum_{j=1}^p
\Pr\left(
d_{ij}(\tau,\tau')
<
-\frac{m_{\tau,\tau'}}{p}
\right).
\]
Under Assumption~\ref{ass:gp}, each $d_{ij}(\tau,\tau')$ is mean-zero
Gaussian with variance $\operatorname{Var}(d_{ij}(\tau,\tau'))
=2\{1-K_\lambda(\tau,\tau')\}\Sigma_{jj}\le\bar\nu^2_{\tau,\tau'}$.
Using the standard Gaussian tail bound,
\[
\Pr\left(
d_{ij}(\tau,\tau')
<
-\frac{m_{\tau,\tau'}}{p}
\right)
\le
\exp\left(
-
\frac{m_{\tau,\tau'}^2}
{2p^2\operatorname{Var}(d_{ij}(\tau,\tau'))}
\right)
\le
\exp\left(
-
\frac{m_{\tau,\tau'}^2}
{2p^2\bar\nu^2_{\tau,\tau'}}
\right).
\]
Summing over $j=1,\ldots,p$ gives
\[
\Pr(\text{crossing})
\le
p
\exp\left(
-
\frac{m_{\tau,\tau'}^2}
{2p^2\bar\nu^2_{\tau,\tau'}}
\right).
\]
For the exponential kernel,
\[
\bar\nu^2_{\tau,\tau'}
=
2\{1-\exp(-|\tau'-\tau|/\lambda)\}\max_{1\le j\le p}\Sigma_{jj},
\]
which yields the stated bound.
\end{proof}
\begin{corollary}
\label{cor:grid_soft_noncrossing}
Let $0<\tau_1<\cdots<\tau_L<1$ be the finite grid of quantile indices
used in estimation, and let
\[
\bm{r}(\bm{z})=(1,\bm{z}')'\in\mathbb{R}^{p},
\qquad \bm{z}\in\mathcal Z\subseteq[0,1]^K,
\qquad p=K+1.
\]
For each adjacent pair $(\tau_\ell,\tau_{\ell+1})$, define the population
margin
\[
m_\ell
=
\inf_{\bm{z}\in\mathcal Z}
\bm{r}(\bm{z})'
\left\{
\bm{\mu}(\tau_{\ell+1})-\bm{\mu}(\tau_\ell)
\right\},
\qquad \ell=1,\ldots,L-1.
\]
Suppose that
\[
m_*=\min_{\ell=1,\ldots,L-1}m_\ell>0.
\]
Then, for a given unit $i$,
\[
\Pr\left(
\exists \ell\in\{1,\ldots,L-1\},\
\exists \bm{z}\in\mathcal Z:
Q_{\tau_{\ell+1}}(y_{it}\mid\mathcal I_{it})
<
Q_{\tau_\ell}(y_{it}\mid\mathcal I_{it})
\right)
\le
p
\sum_{\ell=1}^{L-1}
\exp\left(
-
\frac{m_\ell^2}
{2p^2\bar\nu_\ell^2}
\right),
\]
where $p=K+1$ and
\[
\bar\nu_\ell^2
=
2\{1-K_\lambda(\tau_{\ell+1},\tau_\ell)\}\max_{1\le j\le p}\Sigma_{jj}.
\]
In particular, for the exponential kernel
\[
K_\lambda(\tau,\tau')
=
\exp\left(-\frac{|\tau-\tau'|}{\lambda}\right),
\]
we have
\[
\Pr\left(
\exists \ell,\exists \bm{z}:
Q_{\tau_{\ell+1}}(y_{it}\mid\mathcal I_{it})
<
Q_{\tau_\ell}(y_{it}\mid\mathcal I_{it})
\right)
\le
p
\sum_{\ell=1}^{L-1}
\exp\left[
-
\frac{m_\ell^2}
{
4p^2
\max_{1\le j\le p}\Sigma_{jj}
\left\{
1-\exp(-|\tau_{\ell+1}-\tau_\ell|/\lambda)
\right\}
}
\right].
\]
Consequently, if the adjacent population margins are uniformly positive,
that is $m_*>0$, then the probability of a crossing on the finite quantile
grid converges to zero as $\max_j\Sigma_{jj}\to0$.
\end{corollary}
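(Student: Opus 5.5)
The plan is to reduce the grid statement to $L-1$ applications of Theorem~\ref{thm:soft_noncrossing}, one for each adjacent pair, and then combine them with a union bound over $\ell$.

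First, I write the crossing event on the grid as a finite union,
\[
\mathcal E=\bigcup_{\ell=1}^{L-1}\mathcal E_\ell,\qquad
\mathcal E_\ell=\left\{\exists \bm{z}\in\mathcal Z:\ Q_{\tau_{\ell+1}}^{\,i}(\bm{z})<Q_{\tau_\ell}^{\,i}(\bm{z})\right\}.
\]
The union bound then gives $\Pr(\mathcal E)\le\sum_{\ell}\Pr(\mathcal E_\ell)$. This step needs no assumption on the dependence of the increments $\bm{d}_i(\tau_\ell,\tau_{\ell+1})$ across $\ell$. That matters, because under the GP prior these increments are correlated: for the exponential kernel they are even Markov.

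Second, for each fixed $\ell$ I apply Theorem~\ref{thm:soft_noncrossing} with $(\tau,\tau')=(\tau_\ell,\tau_{\ell+1})$. Its hypothesis $m_{\tau_\ell,\tau_{\ell+1}}=m_\ell>0$ holds because $m_\ell\ge m_*>0$. The theorem then yields
\[
\Pr(\mathcal E_\ell)\le p\exp\!\left(-\frac{m_\ell^2}{2p^2\bar\nu_\ell^2}\right),
\]
with $\bar\nu_\ell^2=2\{1-K_\lambda(\tau_{\ell+1},\tau_\ell)\}\max_j\Sigma_{jj}$, exactly as defined in the corollary. Summing over $\ell$ gives the first display. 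The common time effect $f_t$ cancels inside each difference, as in Lemma~\ref{lem:noncrossing_sufficient}, so writing $Q_\tau(y_{it}\mid\mathcal I_{it})$ instead of $Q^{\,i}_\tau(\bm{z})$ is harmless. For the exponential kernel I substitute $K_\lambda=\exp(-|\tau_{\ell+1}-\tau_\ell|/\lambda)$, which doubles the constant in the denominator to $4p^2$ and gives the second display.

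Third, for the limit, I note that each summand satisfies
\[
\exp\!\left(-\frac{m_\ell^2}{2p^2\bar\nu_\ell^2}\right)\le \exp\!\left(-\frac{m_*^2}{4p^2\max_j\Sigma_{jj}\,c_\ell}\right),\qquad c_\ell=1-K_\lambda(\tau_{\ell+1},\tau_\ell)\in[0,1].
\]
Bounding $c_\ell\le 1$, the whole right-hand side is at most
\[
p(L-1)\exp\!\left\{-\frac{m_*^2}{4p^2\max_j\Sigma_{jj}}\right\}.
\]
This tends to zero as $\max_j\Sigma_{jj}\to0$ because $L$, $p$ and $m_*$ are fixed. If some $c_\ell=0$, the corresponding increment is degenerate and $\Pr(\mathcal E_\ell)=0$, which is consistent with reading the exponential as zero.

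There is no real obstacle here. The only points needing care are that the union bound over $\ell$ is valid despite the dependence across adjacent increments, and that the uniform margin $m_*$ together with the finite grid size $L$ is what makes the final bound uniform as the variance shrinks.
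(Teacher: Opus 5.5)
Your proposal is correct and follows the paper's proof. Both apply Theorem~\ref{thm:soft_noncrossing} to each adjacent pair, take a union bound over the $L-1$ pairs, substitute the exponential kernel, and use finiteness of the grid for the limit; your explicit uniform bound $p(L-1)\exp\{-m_*^2/(4p^2\max_j\Sigma_{jj})\}$ is just a more quantitative version of the paper's term-by-term argument. One small caveat: the step $c_\ell\le 1$ requires $K_\lambda\ge 0$, which holds for the exponential kernel but is not part of Assumption~\ref{ass:gp}, so for a general normalized kernel use $c_\ell\le 2$, which changes only the constant.
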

\begin{proof}
For each adjacent pair $(\tau_\ell,\tau_{\ell+1})$, Theorem
\ref{thm:soft_noncrossing} gives
\[
\Pr\left(
\exists \bm{z}\in\mathcal Z:
Q_{\tau_{\ell+1}}(y_{it}\mid\mathcal I_{it})
<
Q_{\tau_\ell}(y_{it}\mid\mathcal I_{it})
\right)
\le
p
\exp\left(
-
\frac{m_\ell^2}
{2p^2\bar\nu_\ell^2}
\right).
\]
Taking a union bound over the adjacent pairs
$\ell=1,\ldots,L-1$ yields
\[
\Pr\left(
\exists \ell,\exists \bm{z}:
Q_{\tau_{\ell+1}}(y_{it}\mid\mathcal I_{it})
<
Q_{\tau_\ell}(y_{it}\mid\mathcal I_{it})
\right)
\le
p
\sum_{\ell=1}^{L-1}
\exp\left(
-
\frac{m_\ell^2}
{2p^2\bar\nu_\ell^2}
\right).
\]
For the exponential kernel,
\[
\bar\nu_\ell^2
=
2\{1-\exp(-|\tau_{\ell+1}-\tau_\ell|/\lambda)\}\max_{1\le j\le p}\Sigma_{jj},
\]
which gives the stated expression.

Finally, if $m_*=\min_{\ell}m_\ell>0$, then each term in the bound converges
to zero as $\max_j\Sigma_{jj}\to0$. Since the grid is finite, the whole sum converges
to zero. This proves the result.
\end{proof}

\section{Data Appendix}

\begin{table}[H]
\centering
\caption{List of Countries by Region}
\label{tab:countries}
\renewcommand{\arraystretch}{1.2}
\setlength{\tabcolsep}{6pt}
\begin{tabular}{p{5cm} p{5cm}}
\toprule
\textbf{Region} & \textbf{Countries} \\
\midrule
\textit{Asia and Pacific}
& Australia, China, India, Indonesia, Japan, Korea, Malaysia, New Zealand, Philippines, Singapore, Thailand \\[6pt]

\textit{North America}
& Canada, Mexico, United States \\[6pt]

\textit{South America}
& Argentina, Brazil, Chile, Peru \\[6pt]

\textit{Europe}
& Austria, Belgium, Finland, France, Germany, Italy, Netherlands, Norway, Spain, Sweden, Switzerland, Turkey, United Kingdom \\[6pt]

\textit{Middle East and Africa}
& Saudi Arabia, South Africa \\
\bottomrule
\end{tabular}
\end{table}

\section{Additional Robustness Results}

\begin{table}[H]
\centering
\caption{Robustness: change in expected shortfall $\Delta\widehat{\text{ES}}_{i,h}^{s}(0.05)$ with and without the common factor.}
\label{tab:ES_robust1}
\resizebox{\textwidth}{!}{%
\scriptsize
\setlength{\tabcolsep}{3pt}
\begin{tabular}{lccc ccc ccc}
\toprule
& \multicolumn{3}{c}{$h=1$} & \multicolumn{3}{c}{$h=4$} & \multicolumn{3}{c}{$h=10$} \\
\cmidrule(lr){2-4} \cmidrule(lr){5-7} \cmidrule(lr){8-10}
Country & No factor & Baseline & $\Delta$ & No factor & Baseline & $\Delta$ & No factor & Baseline & $\Delta$ \\
\midrule
Argentina      & $-$1.43 & $-$0.54 & $-$0.89 & $+$0.02 & $-$0.23 & $+$0.25 & $+$0.15 & $+$0.09 & $+$0.06 \\
Australia      & $-$0.23 & $-$0.43 & $+$0.20 & $+$0.07 & $-$0.04 & $+$0.11 & $-$0.07 & $-$0.07 & $+$0.00 \\
Austria        & $-$0.48 & $-$0.33 & $-$0.15 & $+$0.08 & $+$0.01 & $+$0.07 & $-$0.06 & $-$0.06 & $+$0.00 \\
Belgium        & $-$0.48 & $-$0.37 & $-$0.11 & $+$0.05 & $-$0.04 & $+$0.09 & $-$0.05 & $-$0.03 & $-$0.02 \\
Brazil         & $-$0.75 & $-$0.54 & $-$0.21 & $+$0.11 & $-$0.17 & $+$0.28 & $+$0.05 & $+$0.03 & $+$0.02 \\
Canada         & $-$0.38 & $-$0.48 & $+$0.10 & $+$0.21 & $-$0.14 & $+$0.35 & $-$0.06 & $-$0.05 & $-$0.01 \\
China          & $-$0.01 & $-$0.36 & $+$0.35 & $+$0.11 & $+$0.05 & $+$0.06 & $+$0.03 & $-$0.04 & $+$0.07 \\
Chile          & $-$1.94 & $-$0.60 & $-$1.34 & $-$0.10 & $-$0.39 & $+$0.29 & $-$0.20 & $-$0.26 & $+$0.06 \\
Finland        & $-$0.38 & $-$0.44 & $+$0.06 & $+$0.22 & $-$0.17 & $+$0.39 & $-$0.20 & $-$0.19 & $-$0.01 \\
France         & $-$0.79 & $-$0.41 & $-$0.38 & $-$0.09 & $-$0.05 & $-$0.04 & $-$0.09 & $-$0.05 & $-$0.04 \\
Germany        & $-$0.19 & $-$0.36 & $+$0.17 & $+$0.27 & $-$0.08 & $+$0.35 & $+$0.00 & $-$0.05 & $+$0.05 \\
India          & $-$0.51 & $-$0.36 & $-$0.15 & $+$0.03 & $-$0.07 & $+$0.10 & $+$0.02 & $-$0.03 & $+$0.05 \\
Indonesia      & $-$0.61 & $-$0.21 & $-$0.40 & $+$0.25 & $-$0.09 & $+$0.34 & $+$0.16 & $+$0.10 & $+$0.06 \\
Italy          & $-$0.12 & $-$0.41 & $+$0.29 & $+$0.29 & $-$0.10 & $+$0.39 & $-$0.08 & $-$0.06 & $-$0.02 \\
Japan          & $-$0.86 & $-$0.45 & $-$0.41 & $+$0.29 & $-$0.15 & $+$0.44 & $+$0.03 & $-$0.04 & $+$0.07 \\
Korea          & $-$0.62 & $-$0.60 & $-$0.02 & $+$0.14 & $-$0.20 & $+$0.34 & $-$0.10 & $-$0.11 & $+$0.01 \\
Malaysia       & $-$0.93 & $-$0.48 & $-$0.45 & $+$0.28 & $-$0.03 & $+$0.31 & $+$0.17 & $+$0.08 & $+$0.09 \\
Mexico         & $-$1.48 & $-$0.76 & $-$0.72 & $+$0.16 & $-$0.36 & $+$0.52 & $+$0.10 & $-$0.02 & $+$0.12 \\
Netherlands    & $-$0.67 & $-$0.40 & $-$0.27 & $+$0.25 & $-$0.02 & $+$0.27 & $+$0.03 & $+$0.01 & $+$0.02 \\
Norway         & $-$0.49 & $-$0.40 & $-$0.09 & $+$0.19 & $-$0.08 & $+$0.27 & $+$0.01 & $+$0.04 & $-$0.03 \\
New Zealand    & $-$0.36 & $-$0.38 & $+$0.02 & $+$0.15 & $-$0.11 & $+$0.26 & $-$0.01 & $-$0.08 & $+$0.07 \\
Peru           & $-$1.29 & $-$0.81 & $-$0.48 & $+$0.01 & $-$0.29 & $+$0.30 & $+$0.03 & $-$0.04 & $+$0.07 \\
Philippines    & $-$1.31 & $-$0.52 & $-$0.79 & $+$0.22 & $-$0.31 & $+$0.53 & $-$0.14 & $-$0.23 & $+$0.09 \\
South Africa   & $-$0.49 & $-$0.48 & $-$0.01 & $+$0.10 & $-$0.10 & $+$0.20 & $-$0.21 & $-$0.12 & $-$0.09 \\
Saudi Arabia   & $-$0.70 & $-$0.40 & $-$0.30 & $+$0.41 & $+$0.06 & $+$0.35 & $-$0.08 & $-$0.12 & $+$0.04 \\
Singapore      & $-$1.07 & $-$0.44 & $-$0.63 & $+$0.39 & $+$0.09 & $+$0.30 & $+$0.00 & $-$0.03 & $+$0.03 \\
Spain          & $+$0.04 & $-$0.36 & $+$0.40 & $+$0.33 & $+$0.01 & $+$0.32 & $+$0.08 & $+$0.01 & $+$0.07 \\
Sweden         & $-$0.62 & $-$0.31 & $-$0.31 & $+$0.30 & $-$0.05 & $+$0.35 & $+$0.02 & $-$0.05 & $+$0.07 \\
Switzerland    & $-$0.58 & $-$0.31 & $-$0.27 & $+$0.03 & $-$0.02 & $+$0.05 & $-$0.04 & $-$0.08 & $+$0.04 \\
Thailand       & $-$1.29 & $-$0.48 & $-$0.81 & $+$0.27 & $-$0.01 & $+$0.28 & $+$0.20 & $+$0.16 & $+$0.04 \\
Turkey         & $-$2.21 & $-$0.90 & $-$1.31 & $+$0.12 & $-$0.20 & $+$0.32 & $-$0.09 & $-$0.17 & $+$0.08 \\
United Kingdom & $-$0.24 & $-$0.48 & $+$0.24 & $+$0.37 & $-$0.05 & $+$0.42 & $-$0.11 & $-$0.10 & $-$0.01 \\
USA            & $-$0.35 & $-$0.39 & $+$0.04 & $+$0.14 & $-$0.08 & $+$0.22 & $-$0.02 & $-$0.06 & $+$0.04 \\
\midrule
\textbf{Panel average} & \textbf{$-$0.722} & \textbf{$-$0.460} & \textbf{$-$0.262} & \textbf{$+$0.172} & \textbf{$-$0.103} & \textbf{$+$0.275} & \textbf{$-$0.016} & \textbf{$-$0.049} & \textbf{$+$0.033} \\
\bottomrule
\end{tabular}%
}
\begin{minipage}{\linewidth}
\smallskip
\footnotesize
\textit{Notes}: Entries report $\Delta\widehat{\text{ES}}_{i,h}^{s}(0.05)$
as defined in equation~\eqref{eq:DeltaES}, evaluated at $\alpha=0.05$
for horizons $h \in \{1,4,10\}$ quarters, comparing the baseline
panel quantile model to a restricted specification with the common
factor $f_{t,h}$ removed entirely. In the ``No factor'' and
``Baseline'' columns, a negative value indicates that a one-standard-deviation
climate shock shifts the lower tail of the conditional output growth distribution
downward, representing a deterioration in growth-at-risk. The $\Delta$ column reports
the difference between the two specifications (no factor minus baseline) and is
therefore not itself a growth-at-risk measure: a negative entry means the restricted
model attributes a larger deterioration to the shock than the baseline does. Panel averages are computed as simple cross-country
means and are displayed in bold.
\end{minipage}
\end{table}

\begin{table}[H]
\centering
\caption{Robustness: out-of-sample relative tail quantile scores, baseline versus restricted model without the common factor.}
\label{tab:forecast_robust1}
\footnotesize
\setlength{\tabcolsep}{4pt}
\renewcommand{\arraystretch}{0.92}
\resizebox{\textwidth}{!}{%
\begin{tabular}{lcccc @{\hspace{18pt}} lcccc}
\toprule
& \multicolumn{3}{c}{Quantile $\tau$} & & & \multicolumn{3}{c}{Quantile $\tau$} & \\
\cmidrule(lr){2-4}\cmidrule(lr){7-9}
Country & $\tau{=}0.01$ & $\tau{=}0.05$ & $\tau{=}0.10$ & Average &
Country & $\tau{=}0.01$ & $\tau{=}0.05$ & $\tau{=}0.10$ & Average \\
\midrule
Argentina      & 0.89 & 0.90 & 0.94 & 0.91 & New Zealand    & 0.42 & 0.58 & 0.73 & 0.59 \\
Australia      & 0.51 & 0.71 & 0.86 & 0.72 & Peru           & 0.35 & 0.53 & 0.59 & 0.51 \\
Austria        & 0.78 & 0.79 & 0.73 & 0.77 & Philippines    & 0.66 & 0.88 & 0.77 & 0.82 \\
Belgium        & 0.41 & 0.56 & 0.61 & 0.55 & South Africa   & 0.24 & 0.47 & 0.65 & 0.47 \\
Brazil         & 0.39 & 0.61 & 0.73 & 0.60 & Saudi Arabia   & 0.92 & 0.96 & 1.07 & 0.97 \\
Canada         & 0.48 & 0.77 & 0.84 & 0.74 & Singapore      & 0.39 & 0.71 & 0.79 & 0.67 \\
China          & 0.72 & 0.88 & 0.99 & 0.88 & Spain          & 0.55 & 0.74 & 0.79 & 0.73 \\
Chile          & 0.49 & 0.76 & 0.80 & 0.72 & Sweden         & 0.34 & 0.56 & 0.65 & 0.54 \\
Finland        & 0.53 & 0.65 & 0.73 & 0.65 & Switzerland    & 0.51 & 0.70 & 0.85 & 0.71 \\
France         & 0.51 & 0.78 & 0.77 & 0.74 & Thailand       & 1.08 & 0.96 & 0.94 & 0.98 \\
Germany        & 0.46 & 0.62 & 0.65 & 0.60 & Turkey         & 0.63 & 0.68 & 0.76 & 0.70 \\
India          & 0.51 & 0.66 & 0.75 & 0.66 & United Kingdom & 0.45 & 0.62 & 0.66 & 0.61 \\
Indonesia      & 0.70 & 0.80 & 0.87 & 0.80 & USA            & 0.42 & 0.70 & 0.85 & 0.69 \\
Italy          & 0.61 & 0.74 & 0.72 & 0.71 &                &      &      &      &      \\
Japan          & 0.59 & 0.74 & 0.79 & 0.73 &                &      &      &      &      \\
Korea          & 0.58 & 0.69 & 0.82 & 0.71 &                &      &      &      &      \\
Malaysia       & 0.39 & 0.55 & 0.55 & 0.53 &                &      &      &      &      \\
Mexico         & 0.29 & 0.50 & 0.52 & 0.47 &                &      &      &      &      \\
Netherlands    & 0.48 & 0.68 & 0.74 & 0.66 &                &      &      &      &      \\
Norway         & 0.76 & 0.82 & 0.87 & 0.82 &                &      &      &      &      \\
\midrule
\multicolumn{10}{l}{\textbf{Panel average: $\tau{=}0.01$: 0.55 \quad $\tau{=}0.05$: 0.70 \quad $\tau{=}0.10$: 0.77 \quad Average: 0.70}} \\
\bottomrule
\end{tabular}%
}
\begin{minipage}{\linewidth}
\smallskip
\footnotesize
\textit{Notes}: Entries report the baseline model's quantile score
$\text{QS}_{\tau}$ relative to that of the restricted specification
without the common factor $f_{t,h}$, for selected quantiles
$\tau \in \{0.01, 0.05, 0.10\}$ and the average across
$\mathcal{T}_L = \{0.01, 0.02, \ldots, 0.10\}$. A value below unity
indicates that the baseline model produces more accurate tail
forecasts than the restricted specification. The evaluation period
and expanding-window procedure follow
Section~\ref{subsec:climate_model_comparison}.
\end{minipage}
\end{table}

\subsection{Simulation Evidence on Overlapping Local-Projection Horizons}
\label{subsec:MA_simulation}

The local projection specification in equation~\eqref{eq:qLP} estimates a
separate regression for each horizon $h$, with the dependent variable
$y_{i,t+h}$ constructed as a cumulative $h$-quarter growth rate. As is
well known in the local projection literature
\citep{jorda2005estimation}, this construction induces a moving-average
structure of order $h-1$ in the regression residuals within each
horizon-specific regression: for a fixed $h$, the outcomes $y_{i,t+h}$
and $y_{i,t+1+h}$ share $h-1$ overlapping periods of the underlying
quarterly growth process and are therefore mechanically correlated.
Left unaddressed, this dependence does not bias point estimates, but it
can in principle understate posterior uncertainty, since observations
treated as conditionally independent given the model in fact share
common underlying shocks.

To assess whether this issue affects inference in our setting, we
conduct a Monte Carlo exercise in which the data-generating process
explicitly replicates the overlapping-horizon structure of
equation~\eqref{eq:qLP}, while holding the true quantile coefficients
fixed across horizons. This design isolates the effect of the
horizon-induced overlap from any change in the underlying
data-generating coefficients, so that any degradation in inference can
be attributed directly to the estimator's treatment of the overlap
rather than to a confound.

\paragraph{Data-generating process.}
Let $t=1,\ldots,T$ index the local-projection origin date, $i=1,\ldots,N$
index cross-sectional units, and $h=1,\ldots,H$ index the forecast
horizon. Let $p$ denote the number of covariates, with the first column
of the design fixed as an intercept and the remaining $p-1$ columns
genuine slope regressors, and let $\mathcal T=\{\tau_1,\ldots,\tau_L\}$
denote the quantile grid.

\emph{Common factor.} We draw $Z_t\sim N(0,I_2)$ i.i.d.\ and construct
the common factor
\[
f_t = \gamma f_{t-1} + \xi' Z_t + \eta_t, \qquad \eta_t\sim N(0,\sigma_f^2),
\]
using the same restricted AR(1) structure and precision-based sampling
approach as in the baseline specification of
Section ~\ref{subsec:climate_specification}, subject to the identifying
normalization $\sum_{t=1}^T f_t = 0$.

\emph{Population and unit-specific quantile paths.} We draw a population
coefficient path represented via a monotone Bernstein polynomial,
\[
\mu_j(\tau) = \sum_{m=0}^M \theta_{j,m} b_{m,M}(\tau), \qquad j=1,\ldots,p,
\]
with the Bernstein coefficients $\{\theta_{j,m}\}_{m=0}^M$ drawn from a
Gaussian distribution and sorted in ascending order in $m$, guaranteeing
that $\mu_j(\cdot)$ is nondecreasing in $\tau$ by construction. Unit-level
deviations from this population path are drawn from a Gaussian process
with an exponential kernel,
\[
\theta_i(\tau) = \mu(\tau) + u_i(\tau), \qquad u_i(\cdot)\sim
\mathcal{GP}\big(0, K_0(\tau,\tau')\big), \qquad K_0(\tau,\tau') =
\sigma_0^2\exp\left(-\frac{|\tau-\tau'|}{\lambda_0}\right),
\]
independently across units $i=1,\ldots,N$ and across coefficients
$j=1,\ldots,p$, mirroring the hierarchical prior structure of our
proposed model in Section~\ref{sec: econometric_framework}. We denote the resulting true
intercept and slope paths $\alpha_0(\tau,i)$ and $\beta_0(\tau,i,\cdot)$,
respectively. Only the population path $\mu(\cdot)$ is guaranteed
monotone by construction; individual unit-level paths may exhibit local
departures from monotonicity, consistent with the soft-noncrossing
property established in Theorem~\ref{thm:soft_noncrossing}.

\emph{Predetermined covariates.} The first covariate column is fixed at
unity; the remaining $p-1$ columns are drawn as $X(t,j,i)\sim
\text{Unif}(0,1)$ i.i.d.\ across $t$, for each unit $i$.

\emph{Primitive shocks, drawn once and shared across all horizons.} For
each unit $i$, we simulate a single AR(1) innovation series over
$T+H+50$ periods (discarding a 50-period burn-in),
\[
u_{i,1} = e_{i,1}, \qquad u_{i,t} = \rho_i u_{i,t-1} + e_{i,t}, \qquad
\rho_i\sim\text{Unif}(0.3,0.7), \quad e_{i,t}\sim N(0,\sigma_i^2), \quad
\sigma_i\sim\text{Unif}(0.5,1.5).
\]
Critically, this series is drawn \emph{once} per Monte Carlo replication
and is reused, unmodified, to construct the outcome at every horizon
$h=1,\ldots,H$. At this stage, every object drawn -- the common factor
$f_t$, the true coefficient paths $\alpha_0$ and $\beta_0$, the
covariates $X$, and the primitive shocks $\{u_{i,t}\}$ -- is fixed for
the remainder of the replication and does not vary with $h$.

\emph{Horizon-$h$ cumulation and the mechanical overlap.} For a given
horizon $h$, we cumulate $h$ leads of the shared primitive shock series,
\[
\text{raw}_{i,t}(h) = \sum_{\ell=1}^{h} u_{i,t+\ell}, \qquad t=1,\ldots,T,
\]
directly mirroring the cumulative construction of $y_{i,t+h}$ in
equation~\eqref{eq:gdp}. Because $\text{raw}_{i,t}(h)$ and
$\text{raw}_{i,t'}(h)$ share $h - |t-t'|$ common innovation terms
whenever $|t-t'|<h$, this reproduces exactly the mechanically-induced
serial dependence of interest, at lags up to $h-1$. We convert
$\text{raw}_{i,t}(h)$ to a within-unit empirical quantile rank,
\[
U_{i,t}(h) = \frac{\text{rank}\big(\text{raw}_{i,t}(h)\big) - 0.5}{T} \in
(0,1), \qquad \text{idx}_{i,t}(h) = \min\Big(L,\max\big(1,
\lceil U_{i,t}(h)\,L\rceil\big)\Big).
\]

\emph{Constructing the horizon-$h$ outcome.} The simulated outcome at
horizon $h$ is then
\[
y_{i,t+h} = X(t,\cdot,i)\begin{bmatrix} \alpha_0\big(\tau_{\text{idx}_{i,t}(h)},\,i\big) \\
\beta_0\big(\tau_{\text{idx}_{i,t}(h)},\,i,\cdot\big)' \end{bmatrix} + f_t.
\]
Because $\alpha_0$, $\beta_0$, $f_t$, and $X$ are identical across every
horizon, only the index $\text{idx}_{i,t}(h)$ -- and hence which point on
the fixed true quantile path is selected at each $(i,t)$ -- varies with
$h$, through the horizon-specific cumulation of the shared primitive
shocks. This construction guarantees that the conditional quantile
function of $y_{i,t+h}$ is known exactly at every $(\tau,i,h)$, while the
degree of mechanically-induced overlap across origin dates $t$ increases
with $h$ exactly as in the empirical local projection.

\paragraph{Coverage check.} For each of $R$ Monte Carlo replications, we
draw a new realization of the true coefficient paths and primitive shock
series as described above, and for each horizon $h=1,\ldots,H$ construct
the corresponding overlapping panel and estimate our proposed
hierarchical Bayesian panel quantile model exactly as in the baseline
empirical specification, with no explicit correction for the
horizon-induced serial dependence. We record whether the true intercept
and slope coefficients, at each quantile and unit, fall within the
resulting 90 per cent posterior credible interval, and compute the
empirical coverage rate by averaging this indicator across quantiles,
units, and replications, separately for each horizon $h$.

\paragraph{Results.} Table~\ref{tab:coverage_sim} reports the resulting
coverage rates by horizon, averaged across $R=100$ Monte Carlo
replications, for a data-generating process calibrated with $N=33$,
$T=180$, and $p=4$ to roughly mimic the dimensions of our empirical
application. Coverage remains close to the nominal 90 per cent level at
every horizon and for every coefficient, ranging from 0.89 to 0.91, with
numerical standard errors of approximately 0.01 throughout. Critically,
there is no systematic decline in coverage as the horizon $h$ increases:
the slope coefficients ($\beta_j$) exhibit a stable coverage rate of
0.90--0.91 uniformly across $h=1,\ldots,10$, and the intercept ($\alpha$)
exhibits a coverage rate of 0.89 at short horizons that rises slightly
to 0.90 from $h=6$ onward, if anything moving closer to the nominal
level as the horizon increases rather than deteriorating. This provides
no evidence that the mechanically-induced MA($h-1$) structure from the
overlapping local-projection construction distorts posterior credible
interval coverage in our setting, consistent with related findings that
a flexible hierarchical conditional mean structure combined with a
quantile specification can deliver well-calibrated inference without an
explicit serial-correlation correction.

\begin{table}[H]
\centering
\caption{Posterior credible interval coverage by horizon.}
\label{tab:coverage_sim}
\begin{tabular}{lcccc}
\toprule
Horizon $h$ & $\alpha$ & $\beta_1$ & $\beta_2$ & $\beta_3$ \\
\midrule
1  & 0.89 & 0.90 & 0.90 & 0.90 \\
   & (0.01) & (0.01) & (0.01) & (0.01) \\
2  & 0.89 & 0.90 & 0.91 & 0.90 \\
   & (0.01) & (0.01) & (0.01) & (0.01) \\
3  & 0.89 & 0.90 & 0.91 & 0.90 \\
   & (0.01) & (0.01) & (0.01) & (0.01) \\
4  & 0.89 & 0.91 & 0.91 & 0.91 \\
   & (0.01) & (0.01) & (0.01) & (0.01) \\
5  & 0.89 & 0.91 & 0.91 & 0.91 \\
   & (0.01) & (0.01) & (0.01) & (0.01) \\
6  & 0.90 & 0.91 & 0.91 & 0.91 \\
   & (0.01) & (0.01) & (0.01) & (0.01) \\
7  & 0.90 & 0.90 & 0.90 & 0.91 \\
   & (0.01) & (0.01) & (0.01) & (0.01) \\
8  & 0.90 & 0.90 & 0.90 & 0.91 \\
   & (0.01) & (0.01) & (0.01) & (0.01) \\
9  & 0.90 & 0.90 & 0.90 & 0.91 \\
   & (0.01) & (0.01) & (0.01) & (0.01) \\
10 & 0.90 & 0.90 & 0.90 & 0.91 \\
   & (0.01) & (0.01) & (0.01) & (0.01) \\
\bottomrule
\end{tabular}
\begin{minipage}{\linewidth}
\smallskip
\footnotesize
\textit{Notes}: Entries report the empirical coverage rate of the 90 per cent posterior credible interval, computed across
$R=100$ Monte Carlo replications, pooled over the quantile grid and
cross-sectional units, for the true coefficient values used to generate
the data. Numerical standard errors of the average coverage rate across
replications are reported in parentheses. The data-generating
coefficients are held fixed across all horizons; only the degree of
mechanically-induced serial dependence from the overlapping
local-projection construction varies with $h$, as described in
Section~\ref{subsec:MA_simulation}.
\end{minipage}
\end{table}

\end{document}